\title{On the Elementary Affine Lambda-calculus\\
  with and Without Type Fixpoints}
\author{{\fontencoding{T5}\selectfont Lê Thành Dũng \textsc{Nguyễn}}\thanks{Partially
  supported by the ANR project ELICA
  (ANR-14-CE25-0005).}
\institute{LIPN, UMR 7030 CNRS, Université Paris 13, France}
\email{nltd@nguyentito.eu}
}
\theoremstyle{plain}
\newtheorem{theorem}{Theorem}[section]
\newtheorem{proposition}[theorem]{Proposition}
\newtheorem{lemma}[theorem]{Lemma}
\theoremstyle{definition}
\newtheorem{definition}[theorem]{Definition}
\newtheorem{remark}[theorem]{Remark}
\newcommand{\Str}{\mathtt{Str}}
\newcommand{\Bool}{\mathtt{Bool}}
\newcommand{\Nat}{\mathtt{Nat}}
\newcommand{\STlam}{{\mathrm{ST}\lambda}}
\newcommand{\EAlam}{{\mathrm{EA}\lambda}}
\newcommand{\muEAlam}{{\mu\mathrm{EA}\lambda}}
\newcommand{\Alamtwo}{{\mathrm{A}\lambda{}2}}
\newcommand{\Lang}{\mathcal{L}}
\newcommand{\End}{\mathrm{End}}
\newcommand{\denot}[1]{\left\llbracket #1 \right\rrbracket}
\newcommand{\trunc}[1]{\|#1\|_0}
\newcommand{\bloup}[2]{\Phi_{#1}(#2)}
\begin{document}
\maketitle

\begin{abstract}
  The elementary affine $\lambda$-calculus was introduced as a polyvalent
  setting for implicit computational complexity, allowing for characterizations
  of polynomial time and hyperexponential time predicates. But these results
  rely on type fixpoints (a.k.a.\ recursive types), and it was unknown whether
  this feature of the type system was really necessary. We give a positive
  answer by showing that without type fixpoints, we get a characterization of
  regular languages instead of polynomial time. The proof uses the semantic
  evaluation method. We also propose an aesthetic improvement on the
  characterization of the function classes FP and $k$-FEXPTIME in the presence
  of recursive types.
\end{abstract}

\section{Introduction}

\paragraph{The elementary affine $\lambda$-calculus}

Elementary Linear Logic (ELL), introduced by Girard~\cite{GirardELL}, is a logic
that can be seen as a typed functional programming language through the
proof-as-programs correspondence. Its typing rules ensure that a function can be
expressed if and only if it is elementary recursive (as is expounded in detail
in~\cite{DanosJoinet}), hence the name. (This is an instance of the
\enquote{type-theoretic} or \enquote{Curry--Howard} approach to implicit
computational complexity.) This was refined by Baillot~\cite{baillot} into a
characterization of each level of the $k$-EXPTIME hierarchy, in an affine
variant of ELL.

A later improvement by Baillot, De Benedetti and Ronchi~\cite{Benedetti}
consisted in turning this logic into an actual type system for a functional
calculus with good properties (e.g.\ subject reduction), called the
\emph{elementary affine $\lambda$-calculus}. In this paper, we shall call their
system $\muEAlam$ -- the reason for the $\mu$ will soon become clear. The main
result about it is:
\begin{theorem}[{\cite{Benedetti}}]
  \label{thm:benedetti}
  The programs of type $\oc\Str \multimap \oc^{k+2}\Bool$ in $\muEAlam$ decide
  exactly the languages in the class \emph{$k$-EXPTIME}. In particular $\oc\Str
  \multimap\ \oc\oc\Bool$ corresponds to \emph{polynomial time (P)} predicates.
\end{theorem}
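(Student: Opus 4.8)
The plan is to prove the two inclusions separately: a \emph{complexity soundness} bound --- every closed program of type $\oc\Str\multimap\oc^{k+2}\Bool$ decides a language in $k$-EXPTIME --- and a \emph{completeness} (programmability) result --- every language in $k$-EXPTIME is decided by some such program. Throughout I write $2_0(n)=n$ and $2_{j+1}(n)=2^{2_j(n)}$, so that $k$-EXPTIME $=\bigcup_p\mathrm{DTIME}(2_k(p(n)))$ over polynomials $p$, with $0$-EXPTIME $=$ P; and I freely use the basic metatheory of $\muEAlam$ --- subject reduction, confluence, strong normalization --- which I assume available (it is established in the cited development).

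\smallskip
\noindent\emph{Soundness.} The engine is a quantitative refinement of subject reduction indexed by the \emph{depth} of a subterm: its nesting under $\oc$ in the typing derivation, equivalently box nesting in the associated proof net. Concretely I would exhibit a weight on typed terms that strictly decreases along every reduction step (so it bounds the number of steps), or equivalently carry out the round-by-round normalization standard for elementary logics --- reduce all redexes of smallest depth, then the next, and so on --- proving that (i) each round terminates and afterwards no redex of that depth or less survives, and (ii) if $S_i$ is the size of the term entering the $i$-th round then $S_0$ is the initial size and $S_{i+1}\le 2^{O(S_i)}$. The single-exponential blow-up per round is precisely what the elementary/affine discipline grants: contraction is confined to $\oc$-types and dereliction and digging are absent, so a round only duplicates boxes of its own depth and every such duplication is accounted for by a contraction already present when the round begins; iterating (ii) yields the elementary (tower) bound. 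Now specialize to $t\,(\oc\underline w)$ with $|w|=n$: it has size $O(n)$, its depth-$0$ part has program-dependent \emph{constant} size, the input data sits at depth $1$, and the output $\Bool$ at depth $k+2$. Tracking the rounds: depth $0$ costs $O(1)$ steps, the depth-$1$ round --- where the input box is copied a bounded number of times and then consumed --- contributes only a polynomial factor, and only the rounds at depths $2,\dots,k+1$, at most $k$ of them, can each trigger one genuine exponential blow-up (the final depth-$(k+2)$ round again being only polynomial), so the whole normalization runs in $2_k(\mathrm{poly}(n))$ steps. Since a redex can be located and contracted in time polynomial in the current term size, a Turing machine performs the entire normalization within the same bound, which keeps us in $k$-EXPTIME. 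A canonical-forms lemma for $\oc^{k+2}\Bool$ shows the normal form codes a genuine boolean, and confluence makes \enquote{the language of $t$} independent of the reduction strategy; hence it lies in $k$-EXPTIME.

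\smallskip
\noindent\emph{Completeness.} For $L\in\mathrm{DTIME}(2_k(p(n)))$ decided by a Turing machine $M$, I would construct a program $t_L:\oc\Str\multimap\oc^{k+2}\Bool$ computing its characteristic function. Take the second-order Church encoding $\Str=\forall\alpha.\,\oc(\alpha\multimap\alpha)\multimap\oc(\alpha\multimap\alpha)\multimap\oc(\alpha\multimap\alpha)$ of binary words; a datatype $\mathsf{Conf}$ of configurations assembled from two such word values, the scanned symbol, and a finite sum for the state; a term $\mathsf{step}:\mathsf{Conf}\multimap\mathsf{Conf}$ realizing one transition of $M$; and terms $\mathsf{init}:\Str\multimap\mathsf{Conf}$ and $\mathsf{extract}:\mathsf{Conf}\multimap\Bool$ for loading the input and reading off acceptance. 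To obtain enough iterations of $\mathsf{step}$: extract the length numeral $\overline n$ one level below the input; do the polynomial-time part of the simulation by \emph{nested} application of $\overline n$ read at type $\oc(\mathsf{Conf}\multimap\mathsf{Conf})\multimap\oc(\mathsf{Conf}\multimap\mathsf{Conf})$, which --- crucially --- adds no box level; and precompose with $k$ copies of the EAL \enquote{exponential combinator}, the term of type $\Nat\multimap\oc\Nat$ got by iterating the linearly definable doubling map $\Nat\multimap\Nat$, each of which adds exactly one box level. Threading these together so that $\mathsf{extract}$'s output emerges at depth $k+2$ gives $t_L:\oc\Str\multimap\oc^{k+2}\Bool$; it remains to verify typing and that on input $\oc\underline w$ it reduces to the normal form coding acceptance of $w$ by $M$, i.e.\ the boolean $[w\in L]$ under $k+2$ boxes. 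For $k=0$ this is exactly the P case, with no exponential combinator.

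\smallskip
\noindent\emph{The main obstacle.} It is the \emph{tight} accounting of box levels on both sides --- hitting \enquote{$k+2$} exactly rather than $k+3$ or a taller tower. On the soundness side this is the claim that the depth-$1$ round only costs a polynomial factor (so the tower has height $k$, not $k+1$) and that the depth-by-depth decomposition really does isolate the blow-ups as stated. On the completeness side it is squeezing \enquote{extract numeral, then polynomial simulation, then $k$ exponentials} into $k+2$ levels, which forces $\mathsf{step}$ to be realizable at essentially no extra box cost even though Church-encoded tapes offer no primitive destructuring --- handled by a careful encoding, the time bound being robust under squaring so that fold-based pattern matching inside $\mathsf{step}$ costs nothing asymptotically. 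The exponential-per-round bound, the canonical-forms lemma, and the confluence and termination hygiene are by comparison routine once the metatheory of $\muEAlam$ is available.
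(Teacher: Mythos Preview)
The paper does not give its own proof of this theorem: it is quoted as a result of Baillot, De~Benedetti and Ronchi~Della~Rocca, with a citation, and the rest of the paper \emph{builds on} it rather than re-establishing it. So there is no in-paper proof to compare yours against line by line. That said, the paper does tell us something crucial about how the cited proof goes, and your proposal diverges from it in a way that is a genuine gap.

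For completeness you encode Turing machine configurations as pairs of \emph{Church} strings and claim a term $\mathsf{step}:\mathsf{Conf}\multimap\mathsf{Conf}$ at no extra box cost, waving at ``fold-based pattern matching'' to handle the lack of a destructor. But the paper explicitly records that the completeness proof in the cited work uses the \emph{Scott} string type $\Str_S$ --- a recursive type, hence a type fixpoint --- to represent configurations, precisely because Church strings do not support a linear, depth-neutral head/tail operation. More damningly, the paper's own Theorem~\ref{thm:elam-reg} shows that in $\EAlam$ (no type fixpoints) the type $\oc\Str\multimap\oc\oc\Bool$ captures only the regular languages; if your Church-only construction worked as stated at $k=0$, it would contradict this. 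So your step function cannot be realized the way you suggest: either $\mathsf{step}$ costs an extra `$\oc$' (wrecking the depth count), or you must fall back on Scott encodings, which is what the actual proof does and which essentially uses the $\mu$ of $\muEAlam$. Your acknowledgement of this as ``the main obstacle'' is accurate, but the proposed resolution (``the time bound being robust under squaring'') does not address the typing problem at all --- the issue is not asymptotic cost but the impossibility of a linear predecessor on Church data in elementary affine discipline.

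Your soundness sketch (level-by-level normalization with one exponential blow-up per depth, the input at depth~$1$ contributing only polynomially) is the standard argument and matches what is known; no objection there.
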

Here are some indications for the reader unfamiliar with linear or
affine type systems:
\begin{itemize}
\item a program of type $A \multimap B$ uses its input of type $A$ at
  most once to produce its output of type $B$;
\item $\oc A$ means roughly \enquote{as many $A$'s as you want}, so a function
  which uses its argument multiple times can be given a type of the form $\oc A
  \multimap B$;
\item in usual linear or affine logic, one can convert a $\oc A$ into a $A$;
  however, in the elementary affine $\lambda$-calculus, there is a restriction
  which makes the \emph{exponential depth} (number of `$\oc$' modalities)
  meaningful, one cannot perform such a depth-changing operation -- this is why
  the depth $k$ of the output $\oc^{k} \Bool$ (i.e.\ $\oc( \ldots (\oc \Bool))$ with
  $k$ `$\oc$') controls the complexity;
\item the type of booleans is defined as $\Bool = \forall \alpha.\, \alpha
  \multimap \alpha \multimap \alpha$, and it has two inhabitants;
\item $\Str = \forall \alpha.\, \Str[\alpha]$, with $\Str[\alpha] = \oc(\alpha
  \multimap \alpha) \multimap \oc(\alpha \multimap \alpha) \multimap \oc(\alpha
  \multimap \alpha)$, is the type of \emph{Church encodings of binary strings}:
  the string $w_1 \ldots w_n \in \{0,1\}^*$ is represented as the function
  which, for any type $A$, takes as input $f_0 : A \multimap A$ and $f_1 : A
  \multimap A$, and returns $f_{w_1} \circ \ldots \circ f_{w_n}$.
\end{itemize}

\paragraph{Type fixpoints and Scott encodings}

We wish to draw attention to a particular feature of this language: the presence
of \emph{type fixpoints}\footnote{A remark for the readers acquainted with typed
  $\lambda$-calculi: there is no \enquote{positivity} constraint imposed, yet
  those recursive types are harmless for the normalization property, as the
  untyped version of the elementary affine $\lambda$-calculus is already
  normalizing. The analogous property for ELL was already remarked
  in~\cite{GirardELL}.}, a.k.a.\ \emph{recursive types}. An example is the type
of \emph{Scott binary strings}:
\[ \Str_S := \forall \alpha.\, (\Str_S \multimap \alpha) \multimap (\Str_S
  \multimap \alpha) \multimap \alpha \multimap \alpha \]
In the elementary affine $\lambda$-calculus as defined in~\cite{Benedetti}, this
recursive equation can be turned into a valid type definition, by using a fixed
point operator $\mu$ on types (this explains our name $\muEAlam$):
\[ \Str_S := \mu\beta.\, \forall \alpha.\, (\beta \multimap \alpha) \multimap
  (\beta \multimap \alpha) \multimap \alpha \multimap \alpha \]
The idea is that strings are represented by their \enquote{pattern-matching}
function (destructor): if $u$ is a Scott binary string, then $u\;f_0\;f_1\;x$
morally means \enquote{if $u$ represents the empty word, return $x$; else,
  return $f_c$ applied to $v$ where $c \in \{0,1\}$ is the first letter and $v$
  represents the suffix}. Formally, we associate to each string $w \in
\{0,1\}^*$ a $\muEAlam$-term $S(w)$ of type $\Str_S$:
\[ S(\varepsilon) = \lambda f_0.\,\lambda f_1.\,\lambda x.\, x \quad
  S(0 \cdot w') = \lambda f_0.\,\lambda f_1.\,\lambda x.\, f_0\;S(w') \quad
  S(1 \cdot w') = \lambda f_0.\,\lambda f_1.\,\lambda x.\, f_1\;S(w') \]
This encoding of strings has been used to give a characterization of function
classes in $\muEAlam$:
\begin{theorem}[{\cite{Benedetti}}]
  \label{thm:scottfp}
  The programs of type $\oc\Str \multimap \oc^{k+2}\Str_S$ in $\muEAlam$ compute
  exactly the functions in the class \emph{$k$-FEXPTIME}. In particular $\oc\Str
  \multimap\ \oc\oc\Str_S$ corresponds to \emph{FP}.
\end{theorem}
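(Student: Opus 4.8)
The plan is to prove the two inclusions separately: \emph{soundness} — every $\muEAlam$-program of type $\oc\Str\multimap\oc^{k+2}\Str_S$ computes a function in $k$-FEXPTIME — and \emph{expressivity} — every function in $k$-FEXPTIME is computed by some such program. In both directions I would follow the proof of Theorem~\ref{thm:benedetti} and simply insert the Scott encoding at the two ends where the input is consumed and the output produced; the payoff of the Scott encoding is that its constructors and its destructor are typable at constant exponential depth, introducing no `$\oc$', so that reading and writing binary strings costs nothing in modalities.

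\emph{Soundness.} This rests on the quantitative normalization bound for $\muEAlam$ that also underlies Theorem~\ref{thm:benedetti}: the number of reduction steps to the normal form of a term is bounded by a tower of exponentials whose height is governed by the exponential depth; writing $\exp_k$ for the $k$-fold iterated exponential, a term of depth $k+2$ and size $s$ normalizes within $\exp_k(\mathrm{poly}(s))$ steps. Given $P:\oc\Str\multimap\oc^{k+2}\Str_S$ and $w\in\{0,1\}^{*}$, the term $P\,(\oc\lceil w\rceil)$, with $\lceil w\rceil$ the Church encoding of $w$, has exponential depth $k+2$ and size $O(|w|)$, so it normalizes within $\exp_k(\mathrm{poly}(|w|))$ steps, i.e.\ in $k$-EXPTIME. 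It remains to extract the output, which forces a \emph{canonical forms lemma}: every closed normal term of type $\Str_S$ equals $S(v)$ for a unique $v\in\{0,1\}^{*}$, and every closed normal term of type $\oc^{k+2}\Str_S$ is $\oc\cdots\oc\,S(v)$; one then recovers $v$ in linear time. The lemma is proved by unfolding the fixpoint once, so that $\Str_S$ presents as $\forall\alpha.\,(\Str_S\multimap\alpha)\multimap(\Str_S\multimap\alpha)\multimap\alpha\multimap\alpha$, and inducting on the term: a closed normal inhabitant must have the shape $\Lambda\alpha.\lambda f_0.\lambda f_1.\lambda x.\,N$ with $N:\alpha$ built only from $x$ and applications $f_c\,(\cdots)$, which pins it down to the grammar of the $S(v)$.

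\emph{Expressivity.} Fix $f\in k$-FEXPTIME computed by a Turing machine $M$ in time $\exp_k(\mathrm{poly}(n))$. I would reuse the expressivity construction for Theorem~\ref{thm:benedetti}, which already simulates a $k$-EXPTIME machine inside $\muEAlam$ while keeping the exponential depth at $k+2$: a term $\mathrm{init}:\oc\Str\multimap\oc\oc\,C_S$ builds the initial configuration from the Church-encoded input, where $C_S$ is a Scott datatype for configurations (two Scott strings for the tape halves together with a finite state); a transition term $\mathrm{step}:C_S\multimap C_S$, a finite case analysis that destructs and rebuilds Scott strings and introduces no `$\oc$', is iterated $\exp_k(\mathrm{poly}(n))$ times, the iterator being produced from the input by $k$ successive exponentiations, each paid for by one modality — and those modalities are precisely what raises the depth from $2$ to $k+2$. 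The only modification is the last step: rather than reading one bit of the halting state, apply a projection $\mathrm{out}:C_S\multimap\Str_S$ — again introducing no `$\oc$' — returning the output tape as a Scott string. The composite has type $\oc\Str\multimap\oc^{k+2}\Str_S$ and computes $f$.

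I expect the main obstacle to be the canonical forms lemma on the soundness side: since $\Str_S$ is introduced by a type fixpoint with no positivity restriction, one cannot invoke an off-the-shelf System~F normal-form analysis, and must check by hand that unfolding $\mu$ together with the affine, impredicative shape of the unfolded type leaves no inhabitant outside the family $\{S(v) : v\in\{0,1\}^{*}\}$. On the expressivity side the delicate point is only bookkeeping — verifying that grafting $\mathrm{out}$ onto the Theorem~\ref{thm:benedetti} construction, and arranging the simulation so that the output tape is already in the correct order, perturbs neither the typing nor the modality count, so that the output genuinely lands at depth $k+2$ and not $k+3$.
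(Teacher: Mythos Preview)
This theorem is not proved in the present paper: it is quoted from~\cite{Benedetti} (as the attribution in the theorem header indicates) and used as a black box, notably as the starting point for the proof of Theorem~\ref{thm:aesthetic} in Section~\ref{sec:proof-mueal}. There is therefore no proof here against which to compare your proposal.

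That said, your sketch is a reasonable outline of how the result is actually established in~\cite{Benedetti}, and the paper does allude to some of the ingredients you name: the reading property (a canonical-forms lemma, here stated only for $\Bool$ as Proposition~\ref{prop:reading}, but the analogous statement for $\Str_S$ is Lemma~31(ii) in~\cite{Benedetti}), and the clock term $t_{\mathcal{M}} : \oc\Str \multimap \oc^{k+1}\Nat$ (recalled just before Theorem~3.5 and explicitly said to be ``used in the proof of Theorem~\ref{thm:scottfp} in order to simulate Turing machines''). Your anticipated difficulty with the canonical-forms lemma is real but manageable; in~\cite{Benedetti} it is handled exactly as you suggest, by unfolding $\mu$ once and doing a direct normal-form analysis.
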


\paragraph{Our contributions}

There are two natural questions concerning the necessity of type fixpoints:
\begin{itemize}
\item In the interface: it is possible to characterize this hierarchy of
  function classes using a function type involving only Church encodings?
\item In the implementation: the extensional completeness proof for the
  predicate classes (Theorem~\ref{thm:benedetti}) makes use of the type $\Str_S$
  (to represent configurations of Turing machines), even though this type does
  not appear in the statement; could one avoid recursive types in the proof?
  This question has been raised by Baillot in the conclusion of~\cite{baillot}.
\end{itemize}

In this paper, we answer both questions. The first one has a positive answer:
\begin{theorem}
  \label{thm:aesthetic}
  The programs of type $\oc\Str \multimap \oc^{k+1}\Str$ in $\muEAlam$ compute
  exactly the functions in the class \emph{$k$-FEXPTIME}. In particular $\oc\Str
  \multimap\ \oc\Str$ corresponds to \emph{FP}.
\end{theorem}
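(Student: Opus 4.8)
The plan is to treat Theorem~\ref{thm:aesthetic} as an improvement of the \emph{interface} in Theorem~\ref{thm:scottfp}, and to transfer between the types $\oc\Str\multimap\oc^{k+2}\Str_S$ and $\oc\Str\multimap\oc^{k+1}\Str$ using the fact that Church‑encoded strings can be coerced into Scott‑encoded ones inside $\muEAlam$. The two inclusions are not symmetric in difficulty. For \emph{soundness} — every program of type $\oc\Str\multimap\oc^{k+1}\Str$ computes a function in $k$-FEXPTIME — it is enough to post‑compose with such a coercion and invoke Theorem~\ref{thm:scottfp}. For \emph{completeness} one cannot simply post‑compose the program furnished by Theorem~\ref{thm:scottfp} with the reverse coercion, because turning a Scott string back into a Church string requires a bounded iteration and therefore \emph{adds} a `$\oc$' instead of removing one; so completeness forces us to revisit the machine simulation, and that is where the real work lies.

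For soundness, I would first exhibit a closed $\muEAlam$-term $\mathrm{c2s}:\Str\multimap\oc\Str_S$ sending the Church encoding of $w$ to $\oc\,S(w)$: apply the Church string to the two boxed ``prepend a bit'' maps $\oc\,(\lambda v.\,\lambda f_0 f_1 x.\, f_c\,v):\oc(\Str_S\multimap\Str_S)$ for $c\in\{0,1\}$ and to $\oc\,S(\varepsilon)$, then collapse the resulting nested box — the single `$\oc$' being exactly the price of iterating over a Church string. Then, given any program $t:\oc\Str\multimap\oc^{k+1}\Str$ computing $f$, the term $\lambda \bar w.\ \oc^{k+1}(\mathrm{c2s})\,(t\,\bar w)$, where $\oc^{k+1}(-)$ is the (definable) action of $k+1$ nested boxes on a morphism, has type $\oc\Str\multimap\oc^{k+2}\Str_S$ and still computes $f$ (it rewrites $\oc^{k+1}\overline{f(w)}$ to $\oc^{k+2}S(f(w))$); Theorem~\ref{thm:scottfp} then gives $f\in k$-FEXPTIME. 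Specializing to $k=0$ already shows that every $\oc\Str\multimap\oc\Str$ program lies in FP.

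For completeness I would redo the extensional completeness construction behind Theorems~\ref{thm:benedetti} and~\ref{thm:scottfp}. Given $f\in k$-FEXPTIME, fix a Turing machine $M$ computing it within a time bound that is a tower of $k$ exponentials over a polynomial in the input length; build inside $\muEAlam$ a clock for that bound by iterating ``double'' $k$ times over a polynomial extracted from the length of the input, and iterate the Scott‑encoded transition function of $M$ accordingly on the initial configuration. The only change is the read‑out phase: rather than returning the final tape as a Scott string — which is what the construction of Theorem~\ref{thm:scottfp} spends its last `$\oc$' on — I would use a length‑bounded recursion, driven by a copy of the clock at the appropriate depth, to destruct the final Scott configuration and accumulate its bits directly into a Church string. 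This needs a Church‑string ``cons'' combinator of type $\Str\multimap\Str$ (available because the two bit‑functions carry a `$\oc$' and can thus be duplicated) and a clock‑controlled iteration scheme over $\Str_S$. The main obstacle, and the only genuinely delicate point, is the depth bookkeeping: one has to arrange the clock and the bounded read‑out so that the Church output lands at exactly $\oc^{k+1}\Str$ — one box shallower than the Scott output of Theorem~\ref{thm:scottfp} — rather than deeper, while the clock stays long enough to drive both the simulation and the read‑out; everything else is routine EAL encoding. Taking $k=0$ yields the advertised characterization of FP by $\oc\Str\multimap\oc\Str$.
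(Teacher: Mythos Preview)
Your soundness argument is correct and matches the paper's: build (or cite) a coercion $\Str \multimap \oc\Str_S$, promote it by $k+1$ boxes, post-compose, and invoke Theorem~\ref{thm:scottfp}. The paper even takes a lazier route, obtaining the coercion $\oc\Str \multimap \oc^2\Str_S$ by applying the completeness half of Theorem~\ref{thm:scottfp} to the identity function.

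For completeness, your high-level plan --- use a clock to drive a bounded destruction of the final Scott string and rebuild a Church string --- is exactly the paper's idea, but the concrete mechanism you propose does not close the depth gap you yourself flag. If you iterate a step of type $(\Str \otimes \Str_S) \multimap (\Str \otimes \Str_S)$ using a clock $n : \Nat$, the result lands under one more `$\oc$' than $n$; with the clock available at depth $k+1$ (which is tight for a $k$-FEXPTIME bound), you get the Church output at depth $k+2$, not $k+1$. Accumulating via a $\mathtt{cons} : \Str \multimap \Str$ does not help: the extra box comes from the iteration, not from $\mathtt{cons}$.

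The paper resolves this with a single lemma, $\mathtt{cast} : \Nat \multimap \oc\Str_S \multimap \Str$, that lands at the \emph{same} depth as its $\Nat$ argument. The trick is not to carry an accumulator of type $\Str$, but to instantiate the clock at $(\alpha \multimap \alpha) \otimes \Str_S$ where $\alpha$ is the eigenvariable of the \emph{output} $\Str$; the box produced by iteration then \emph{is} the internal box of the output Church string, so no depth is lost. With this, the paper stays fully modular: it takes $f : \oc\Str \multimap \oc^{k+2}\Str_S$ from Theorem~\ref{thm:scottfp} and the clock $t_{\mathcal{M}} : \oc\Str \multimap \oc^{k+1}\Nat$ as black boxes, promotes $\mathtt{cast}$ to depth $k+1$, and combines them --- no need to reopen the machine simulation at all.
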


An advantage of this characterization is that it reflects the fact that
composing a $k$-FEXPTIME function $f$ with a $l$-FEXPTIME function $g$ gives a
$(k+l)$-FEXPTIME function: since any $\muEAlam$-term of type $A \multimap B$
lifts to a term of type $\oc^{k}A \multimap \oc^{k}B$ (this is called
\enquote{functorial promotion}, cf.\ Proposition~\ref{prop:promotion}), we can
compose the terms $f : \oc\Str \multimap \oc^{k+1}\Str$ and $g^{(k)} :
\oc^{k+1}\Str \multimap \oc^{(l+1)+k}\Str$ to obtain a term of type $\oc\Str
\multimap \oc^{(k+l)+1}\Str$. In particular FP is closed under composition. A
characterization of FP in $\muEAlam$ by a function type whose input and output
types coincide was proposed in~\cite{Benedetti}, but it is less natural: a
string is represented as a pair of its length (Church-encoded) and its contents
(Scott-encoded).

As for the second question, we should first mention that Girard's original
characterization of elementary recursive functions in ELL does not involve type
fixpoints. This can be replayed in the elementary affine $\lambda$-calculus
\emph{without} type fixpoints, which we shall denote by $\EAlam$.
\begin{theorem}[{\cite{baillot}}]
  The class of elementary recursive functions is the union, over $k \in
  \mathbb{N}$, of the classes of functions computed by programs of type $\oc\Str
  \multimap \oc^{k}\Str$ in $\EAlam$.
\end{theorem}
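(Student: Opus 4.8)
The plan is to establish the two inclusions separately, following Girard's treatment of \textsf{ELEMENTARY} in ELL~\cite{GirardELL} (see also the detailed exposition in~\cite{DanosJoinet}) and transposing it to the affine, type-fixpoint-free calculus $\EAlam$.

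\emph{Soundness: every such program computes an elementary recursive function.} The key external input is the quantitative normalization theorem for $\EAlam$: a term of size $s$ and exponential depth at most $d$ normalizes, by depth-by-depth reduction, in a number of steps bounded by a tower of exponentials of height $d$ applied to $s$ (this is the elementary bound of ELL; as the footnote recalls, plain normalization already holds for the untyped calculus, and it is the depth stratification that makes the bound elementary). Fix a program $t : \oc\Str \multimap \oc^{k}\Str$, say of depth $D$, and an input $w \in \{0,1\}^*$. The Church encoding $\underline{w}$, promoted to type $\oc\Str$, has size $O(|w|)$ and bounded depth, so $t\,\underline{w}$ has size $O(|w|)$ and depth at most $\max(D,c)$ for some constant $c$; hence it normalizes in time elementary in $|w|$. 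Its normal form is a closed $\beta$-normal term of type $\oc^{k}\Str$ which, by the canonicity of normal forms at this type, is $\oc^{k}$ applied to the Church encoding $\underline{v}$ of a uniquely determined output string $v$. Since $v$ is read off the syntax of the normal form in polynomial time, the function $w \mapsto v$ is elementary recursive.

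\emph{Completeness: every elementary recursive function is computed.} Here I would use the machine-theoretic description $\textsf{ELEMENTARY} = \bigcup_{k} \mathrm{DTIME}(\mathrm{tower}_k(n))$ and simulate, inside $\EAlam$, a deterministic Turing machine running within such a bound. Three ingredients are needed, all definable without type fixpoints. (i) \emph{Iterated exponentiation}: multiplication is $\EAlam$-definable at type $\Nat \multimap \Nat \multimap \Nat$ with no box cost; iterating the doubling map $\lambda x.\,\mathrm{mult}\,\overline{2}\,x : \Nat \multimap \Nat$ with a numeral gives $n \mapsto 2^{n}$ at type $\Nat \multimap \oc\Nat$, costing exactly one box; composing $k$ such maps, with functorial promotions (Proposition~\ref{prop:promotion}) to align depths, yields a tower function $\Nat \multimap \oc^{k}\Nat$. (ii) \emph{Configurations and transitions}: one encodes Turing configurations on a bounded tape using Church-style encodings only (tape contents over an extended alphabet as Church strings, head position and control state as finite data), together with a suitably typed transition, or rather an iterable simulation combinator, organized so that the number of steps can be driven by the tower numeral. (iii) \emph{Assembly}: on input $w : \oc\Str$ one extracts its length as a numeral, applies the tower function to obtain a clock at depth $k$, builds the initial configuration from $w$ at a matching depth, drives the simulation for that many steps, and reads off the output string; careful depth accounting via functorial promotion produces a term of type $\oc\Str \multimap \oc^{k'}\Str$ with $k'$ depending only on $k$, and taking the union over $k$ recovers every elementary recursive function.

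\emph{Main obstacle.} The soundness direction is essentially a repackaging of the normalization bound and of normal-form canonicity, and is routine. The work is concentrated in ingredients (ii) and (iii): producing a type-fixpoint-free, depth-controlled Church encoding of Turing configurations and the accompanying transition combinators, and then threading the exponential depths so that iterating $\mathrm{tower}_k(|w|)$ steps stays at depth $k + O(1)$ rather than blowing up. This is precisely the point at which $\muEAlam$'s completeness proof (Theorem~\ref{thm:benedetti}) reaches for the recursive type $\Str_S$; here one instead replays Girard's more laborious Church-only encoding, which is the delicate part of the argument.
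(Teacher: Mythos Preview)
The paper does not supply its own proof of this theorem: it merely cites \cite{baillot} and remarks that the detailed argument there (for Elementary Affine Logic) can be directly transposed to $\EAlam$. Your proposal is a faithful high-level reconstruction of precisely that argument --- soundness via the depth-indexed elementary normalization bound, completeness via a Church-only Turing-machine simulation driven by an iterated-exponentiation clock $\Nat \multimap \oc^{k}\Nat$ --- and it correctly locates the delicate point (the fixpoint-free encoding of configurations and step functions with controlled depth). So your approach agrees with what the paper intends; there is nothing to compare beyond noting that the paper outsources the details to \cite{baillot} whereas you sketch them explicitly.
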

(The detailed proof given in~\cite{baillot} is for Elementary Affine Logic; it
can be directly transposed to $\EAlam$.)

However, the characterization of P by $\oc\Str \multimap \oc\oc\Bool$ fails in
$\EAlam$, as we show:
\begin{theorem}
  \label{thm:elam-reg}
  The programs of type $\oc\Str \multimap \oc\oc\Bool$ in $\EAlam$ decide exactly
  the \emph{regular languages}. This is also the case for the $\EAlam$-terms of
  type $\Str \multimap \oc\Bool$.
\end{theorem}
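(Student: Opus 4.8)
The plan is to prove the two directions separately, using the semantic evaluation method for the hard (soundness) direction. For the easy direction, every regular language is decided by a DFA, and I would show that a DFA run can be implemented by a $\EAlam$-term of type $\oc\Str \multimap \oc\oc\Bool$ (equivalently $\Str \multimap \oc\Bool$). The key observation is that the transition function of a DFA with state set $Q$ acts on the finite type $Q$ (encodable as an iterated $\Bool$, since $\Bool$ has exactly two inhabitants and finite types are definable without fixpoints), and the Church encoding of a string of type $\Str[Q]$ lets us iterate the two transition functions $\delta_0, \delta_1 : Q \multimap Q$; reading off acceptance from the final state is then a finite case analysis producing a $\Bool$. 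The boxing pattern $\oc\Str \multimap \oc\oc\Bool$ (one extra $\oc$ compared to $\Str \multimap \oc\Bool$) is exactly what functorial promotion (Proposition~\ref{prop:promotion}) gives us, so it suffices to build the term at type $\Str \multimap \oc\Bool$.

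For the hard direction — that every $\EAlam$-term of type $\Str \multimap \oc\Bool$ (or $\oc\Str \multimap \oc\oc\Bool$) decides a regular language — I would use semantic evaluation: interpret $\EAlam$-types and terms in a suitable denotational model, and exploit the \emph{finiteness} of the interpretation of types \emph{without} type fixpoints. Concretely, the type $\Str$ in $\EAlam$ is a second-order type, and under the evaluation we only ever instantiate its quantifier at the types occurring in the term; since there are no $\mu$'s, all the relevant ground types have \emph{finite} denotation, so the denotation of the whole term factors through a finite set. The crucial step is to show that the behaviour of the term on the Church-string $\underline{w}$ depends only on a \emph{finitely-stated} piece of data that evolves along $w$ in a way that can be tracked by a finite automaton — essentially, the term, when fed the two "letter functions" acting on a finite semantic domain $D$, computes some element of the finite monoid $D \to D$ (or a bounded-size fragment thereof), and the map $w \maps$ (that element) is a monoid homomorphism from $\{0,1\}^*$ into a finite monoid. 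By the Myhill–Nerode / recognition-by-finite-monoid characterization of regular languages, the accepted set is regular. To handle the outer $\oc$'s in $\oc\Str \multimap \oc\oc\Bool$ one first normalizes / strips the boxes, reducing to the core case $\Str \multimap \oc\Bool$; the depth discipline of $\EAlam$ guarantees this stripping is well-behaved.

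The main obstacle I expect is making the "semantic evaluation" genuinely give a \emph{finite} invariant despite the presence of the second-order quantifier in $\Str$ and $\Bool$ and the higher-order function space $\alpha \multimap \alpha$ inside $\Str[\alpha]$: one must argue that, for a fixed term $t : \Str \multimap \oc\Bool$, the evaluation of $t\,\underline{w}$ only probes the polymorphic string at a \emph{single}, term-determined finite type $A$ (or finitely many), and that the resulting computation on $A \multimap A$ lands in a set of size bounded \emph{independently of $w$}. This is where the absence of $\mu$ is doing all the work — with fixpoints one could build an unbounded type like $\Str_S$ and encode Turing-machine configurations, recovering P; without them the reachable finite types have a size bound depending only on $t$. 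Technically this likely requires a careful semantic setup (a quantitative/relational or coherence-space model with a "truncation at depth $0$" operation — note the paper's $\trunc{\cdot}$ and $\bloup{}{}$ macros — so that the interpretation of an $\EAlam$-derivation of bounded depth has bounded size), together with a lemma that reduction preserves the semantics and that the semantic value of $t\,\underline{w}$ determines the boolean output. Assembling these — finiteness of the model on $\mu$-free types, invariance under reduction, and the homomorphism property of $w \mapsto \denot{t\,\underline{w}}$ — and then invoking the algebraic characterization of regular languages, completes the argument.
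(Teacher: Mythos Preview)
Your high-level architecture matches the paper's: completeness by encoding a finite automaton (equivalently, a recognizing monoid) at type $\Str\multimap\oc\Bool$ and lifting via functorial promotion; soundness by semantic evaluation into a finite monoid and the algebraic characterization of regularity. But there is a real gap in how you locate the finiteness.

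You propose to ``interpret $\EAlam$-types and terms in a suitable denotational model, and exploit the finiteness of the interpretation of types without type fixpoints''. This would fail as stated: $\mu$-free $\EAlam$ types are \emph{not} finite in any non-trivial model --- e.g.\ $\Nat = \forall\alpha.\,\oc(\alpha\multimap\alpha)\multimap\oc(\alpha\multimap\alpha)$ has infinitely many closed normal inhabitants, so an adequate semantics must give it an infinite denotation. What yields finiteness is the absence of $\oc$, not of $\mu$. The paper's Lemma~\ref{lem:finsem} asserts that the \emph{exponential-free} fragment $\Alamtwo$ has a non-trivial finite semantics (itself a non-obvious fact, established in separate work), and the truncation $\trunc{-}$ is correspondingly a \emph{syntactic} map from $\EAlam$ to $\Alamtwo$, not an operation inside a model. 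Before any semantics is invoked, the paper performs substantial syntactic analysis (Lemmas~\ref{lem:shape}--\ref{lem:unary-trunc}): a normal $t:\oc\Str\multimap\oc\oc\Bool$ unfolds not to a single term of type $\Str\multimap\oc\Bool$ but to some $u:\Str[\sigma_1]\multimap\ldots\multimap\Str[\sigma_n]\multimap\oc\Bool$ (the $\oc$ on the input licenses $n$ copies, each at a possibly different instance); for each argument position one then extracts closed $f_0,f_1:\sigma_i\multimap\sigma_i$ and a continuation $g$ so that the answer depends only on $f_{w_1}\circ\cdots\circ f_{w_{|w|}}$; finally one replaces $f_0,f_1,g$ by their truncations, which live in $\Alamtwo$, where the finite semantics now applies and delivers the monoid morphism. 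Your ``normalize / strip the boxes, reducing to the core case $\Str\multimap\oc\Bool$'' glosses over both the multiplicity $n$ and this passage to the exponential-free fragment, and it is precisely the latter that makes the semantic-evaluation step go through.
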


This result is surprising for a few reasons: the class of languages obtained is
unexpectedly small, and it hints at connections between $\EAlam$ and formal
language theory (the conclusion will discuss this further). The proof techniques
for the above theorem are quite different from those used in~\cite{Benedetti}:
instead of bounding the syntactic normalization process, we take inspiration
from the tradition of implicit complexity in the \emph{simply typed
  $\lambda$-calculus} ($\STlam$), in particular from:
\begin{theorem}[Hillebrand \& Kanellakis~\cite{HillebrandKanellakis}]
  \label{thm:hk}
  In the \emph{simply typed $\lambda$-calculus}, the languages decided by terms
  of type $\Str_{\STlam}[A] \to \Bool_\STlam$ -- $A$ is a simple type that may be chosen
  depending on the language -- are exactly the regular languages.
\end{theorem}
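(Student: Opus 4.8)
The plan is to prove the two inclusions separately, using a direct automaton simulation for completeness and the semantic evaluation method -- interpretation in a finite full set-theoretic type hierarchy -- for soundness. Throughout, fix the binary alphabet $\{0,1\}$, write $\underline{w}$ for the Church encoding of $w \in \{0,1\}^*$ (so $\underline{w}\,f_0\,f_1 = f_{w_1} \circ \dots \circ f_{w_n}$ for $w = w_1 \cdots w_n$), and take $\Str_{\STlam}[A] = (A \to A) \to (A \to A) \to A \to A$, $\Bool_\STlam = o \to o \to o$ with the two closed normal inhabitants $\mathrm{true}$ and $\mathrm{false}$; a term $t$ \emph{decides} $L$ when $t\,\underline{w}$ reduces to $\mathrm{true}$ iff $w \in L$.

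For completeness, let a regular language $L$ be given by a DFA. I would first pass to a DFA recognizing the reversal $L^R$ -- this accounts for the fact that Church iteration reads $w$ left to right but composes transition functions in the opposite order -- with state set $Q = \{q_1, \dots, q_k\}$, transition maps $\delta_0, \delta_1 : Q \to Q$, initial state $q_{\mathrm{init}}$ and accepting set $F$. Choosing the base type $o$ to \emph{be} $\Bool_\STlam$ and setting $A := \underbrace{o \to \dots \to o}_{k+1}$, I represent the state $q_i$ by the projection $p_i := \lambda x_1 \dots x_k.\, x_i : A$ and the transition $\delta_c$ by $d_c := \lambda s.\, \lambda x_1 \dots x_k.\, s\, x_{\delta_c(q_1)} \dots x_{\delta_c(q_k)} : A \to A$; a one-line $\beta$-computation shows $d_c\,p_i$ reduces to $p_{\delta_c(q_i)}$. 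Then $t := \lambda u.\, u\, d_0\, d_1\, p_{q_{\mathrm{init}}}\, \beta_1 \dots \beta_k$, where $\beta_i$ is $\mathrm{true}$ if $q_i \in F$ and $\mathrm{false}$ otherwise, has type $\Str_{\STlam}[A] \to \Bool_\STlam$ and decides $L$, since $u\,d_0\,d_1\,p_{q_{\mathrm{init}}}$ reduces to the projection $p_j$ with $q_j$ the state reached by the reversed automaton on $w^R$.

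For soundness, let $t : \Str_{\STlam}[A] \to \Bool_\STlam$ be arbitrary and interpret every base type occurring in $t$ by a finite set of cardinality at least $2$ in the full set-theoretic type hierarchy $\denot{\cdot}$; this makes all types finite and ensures $\denot{\mathrm{true}} \neq \denot{\mathrm{false}}$. The key observation is that, writing $E := \denot{A \to A}$, the interpretation $\denot{\underline{w}} \in \denot{\Str_{\STlam}[A]}$ is precisely the curried form of the map $(f_0, f_1) \mapsto f_{w_1} \circ \dots \circ f_{w_n}$, and $w \mapsto \denot{\underline{w}}$ is a monoid homomorphism $\phi$ from $\{0,1\}^*$ into the \emph{finite} monoid $N$ of all functions $E \times E \to E$ equipped with pointwise composition (identity: the constant map to $\mathrm{id}_{\denot{A}}$); indeed $\phi(vw) = \phi(v) \ast \phi(w)$ by associativity of composition. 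Consequently $\denot{t\,\underline{w}} = \denot{t}(\phi(w))$ depends only on $\phi(w) \in N$. Soundness of $\denot{\cdot}$ under $\beta$-reduction, together with normalization of $\STlam$ and the fact that the only closed normal forms of type $\Bool_\STlam$ are $\mathrm{true}$ and $\mathrm{false}$, yields $t\,\underline{w} \twoheadrightarrow \mathrm{true}$ iff $\denot{t\,\underline{w}} = \denot{\mathrm{true}}$. Hence the language decided by $t$ equals $\phi^{-1}(\{m \in N : \denot{t}(m) = \denot{\mathrm{true}}\})$, which is recognized by the finite monoid $N$ and is therefore regular.

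The main obstacle is the soundness direction, and specifically the semantic bookkeeping it requires: setting up the finite model and establishing its soundness and adequacy for closed terms of boolean type, and -- the crux -- verifying cleanly that $\denot{\underline{w}}$ factors through the finite monoid $N$ via a homomorphism, since the monoid structure and the homomorphism property are exactly what forces the decided language to be recognizable, hence regular. The completeness direction is routine once the projection/transition encoding and the reversal trick are in place.
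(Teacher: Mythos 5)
Your proof is correct, and on the direction the paper actually proves it is essentially the paper's argument: you interpret in the finite full type frame (base set of size at least two, hence non-trivial), observe that $\denot{\overline{w}}$ is, up to currying, an element of the finite monoid $\End(\denot{A})^{\End(\denot{A})^2}$ with pointwise composition, check that $w \mapsto \denot{\overline{w}}$ is a monoid morphism, and conclude via the recognizability-by-finite-monoids characterization of regular languages (the paper's Theorem~\ref{thm:monoid}); your explicit treatment of adequacy (invariance under $\beta$, normalization, and the fact that $\mathtt{true}$, $\mathtt{false}$ are the only closed normal inhabitants of $\Bool_\STlam$) just spells out what the paper's sketch leaves implicit. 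Where you genuinely add something is the completeness direction, which the paper does not prove for $\STlam$ at all (it is delegated to the citation; the paper only proves the analogous completeness in $\EAlam$). Your DFA simulation --- states as projections over $A = \Bool_\STlam \to \cdots \to \Bool_\STlam$, transitions as reindexing maps, plus the reversal trick to compensate for the order in which Church iteration composes the $d_c$'s --- is correct (note that your phrase \enquote{choosing the base type $o$ to be $\Bool_\STlam$} should be read as taking $A := \Bool_\STlam \to \cdots \to \Bool_\STlam$, which is a legitimate simple type over the single base $o$). By contrast, the paper's affine completeness argument encodes the elements of a finite monoid recognizing the language and lets each letter act by left multiplication, which sidesteps the reversal issue and transfers more directly to the affine discipline (each variable used once); your automaton encoding would need the same kind of care about duplication if you tried to port it to $\EAlam$, but for $\STlam$ as stated it is perfectly fine.
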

Here $\Str_\STlam[A] = (A \to A) \to (A \to A) \to (A \to A)$ and
$\Bool_\STlam = o \to o \to o$, where $o$ is a base type. This is proved using
the \emph{semantic evaluation} method (see~\cite{TeruiSemantics} and references
therein). To make this method work in our case, we need a new result in
denotational semantics:
\begin{lemma}
  \label{lem:finsem}
  The \emph{second-order affine $\lambda$-calculus} $\Alamtwo$ -- i.e.\ the
  subsystem of $\EAlam$ without the exponential modality `$\oc$' -- admits a
  non-trivial \emph{finite} semantics.
\end{lemma}

By \enquote{non-trivial} we mean distinguishing the two inhabitants of $\Bool =
\forall \alpha.\, \alpha \multimap \alpha \multimap \alpha$. The term
\enquote{second-order} refers to the (impredicative) polymorphism supported by
both $\muEAlam$ and $\EAlam$ -- indeed, the types $\Bool$, $\Str$ and $\Str_S$
all contain second-order quantifiers ($\forall$). The lemma means morally that
one cannot represent infinite data types in $\muEAlam$ without using the
exponential modality -- whereas in $\muEAlam$, the exponential-free type
$\Str_S$ encodes the infinite set $\{0,1\}^*$.

Thus, motivated by this question in implicit complexity, we set out to establish
the above lemma, and came up with two approaches:
\begin{itemize}
\item a \enquote{category-theoretic} solution consists in showing the finiteness
  of a pre-existing model based on coherence spaces and normal functors; this is
  the subject of another paper~\cite{FiniteMALL2};
\item a \enquote{syntactic} solution, developed in a joint work with P.\
  Pistone, T.\ Seiller and L.\ Tortora de Falco, relies on a careful
  combinatorial study of second-order proof nets; it will be written up in an
  upcoming paper.
\end{itemize}
The further development of these semantic tools has led to more results on
$\EAlam$ and/or on Elementary Linear Logic without type fixpoints, which are
beyond the scope of the present paper. This includes an already published joint
work with P.\ Pradic~\cite{sequel} on logarithmic space.

\paragraph{Plan of the paper}

We recall from~\cite{Benedetti} the definitions of $\EAlam$ and $\muEAlam$ in
Section~\ref{sec:definitions}, and then quickly prove
Theorem~\ref{thm:aesthetic} in Section~\ref{sec:proof-mueal}. The bulk of the
paper is Section~\ref{sec:regular}, dedicated to proving
Theorem~\ref{thm:elam-reg}. The conclusion (Section~\ref{sec:conclusion})
discusses the above-mentioned new perspectives on $\EAlam$ opened up by our
results and by refinements of Lemma~\ref{lem:finsem}.

\paragraph{Acknowledgments}

This work owes a great deal to Thomas Seiller's supervision. Thanks also to
Patrick Baillot, Alexis Ghyselen, Damiano Mazza (an extremely fruitful
discussion with Thomas and him triggered this work) and Pierre Pradic.

\newpage

\section{The elementary affine $\lambda$-calculus}
\label{sec:definitions}

The syntax of elementary affine $\lambda$-terms and the reduction rules are
given by
\[ t, u ::= x \mid \lambda x.\, t \mid \lambda\oc x.\, t \mid t\,u \mid \oc t
  \qquad
 (\lambda x.\, t)\, u \longrightarrow_\beta t\{x := u\} \quad
  (\lambda\oc x.\, t)\, (\oc u) \longrightarrow_\oc t\{x := u\}
\]
where $x$ is taken in a countable set of variables, and $t\{x:=u\}$ refers to
the substitution of all free occurrences of $x$ in $t$ by $u$. The reduction
rules $\longrightarrow_\beta$ and $\longrightarrow_\oc$ are actually the
contextual closure of the rules given above, for the obvious notion of context
(see~\cite{Benedetti} for details).

We shall also write $\mathtt{let}\; \oc x \leftarrow u \;\mathtt{in}\; t$ for
$(\lambda\oc x.\, t)\, u$ (this is just some \enquote{syntactic sugar}). The
notion of \emph{depth} of a subterm in a term, defined as the number of
\emph{exponential modalities} $\oc(-)$ (\enquote{exponentials} for short)
surrounding the subterm, will play an important role.

As an example, let us formally define the Church-encoded binary strings:
\[ \text{for $w = w_1\ldots w_n \in \{0,1\}^*$},\quad \overline{w} =
  \lambda\oc f_0.\, \lambda\oc f_1.\, \oc(\lambda x.\, f_{w_1}\, (\ldots
  (f_{w_n}\,x)\ldots))\]

The above is essentially Simpson's linear $\lambda$-calculus with
thunks~\cite{Simpson}. (Other examples of linear $\lambda$-calculi with explicit
exponentials are given in~\cite{BangCalculus}.) We shall now turn this untyped
calculus into $\EAlam$ by endowing it with its type system -- an adaptation of
Coppola \emph{et al.}'s Elementary Type Assignment System~\cite{Coppola}. The
grammar of types for $\EAlam$ is
\[ A ::= \alpha \mid S \qquad S ::= \sigma \multimap \tau \mid \forall
  \alpha.\,S \qquad \sigma, \tau ::= A \mid \oc \sigma \] The two first classes
of types are called respectively \emph{linear} and \emph{strictly linear}. (We
follow the terminology of~\cite{Benedetti}; \enquote{linear} does not mean
exponential-free, it merely means that the head connective is not an
exponential.) The reason for restricting quantification to strictly linear types
is a technical subtlety related to subject reduction (see~\cite[\S7.2]{Coppola}).

The typing judgements involve a context split into three parts: they are of the
form $\Gamma \mid \Delta \mid \Theta \vdash t : \sigma$. The idea is that the
partial assignements $\Gamma$, $\Delta$ and $\Theta$ of variables to types
correspond respectively to linear, non-linear and \enquote{temporary} variables;
accordingly, $\Gamma$ maps variables to linear types (denoted $A$ above),
$\Delta$ maps variables to types of the form $\oc\sigma$, while $\Theta$ maps
variables to arbitrary types. The domains of $\Gamma$, $\Delta$ and $\Theta$ are
required to be pairwise disjoint. The derivation rules for $\EAlam$ are:
\[\text{variable rules}\qquad \frac{}{\Gamma, x : A \mid \Delta \mid \Theta
    \vdash x : A} \qquad \frac{}{\Gamma \mid \Delta \mid \Theta, x : \sigma \vdash x
    : \sigma}\]
\[\text{abstraction rules}\qquad \frac{\Gamma, x : A \mid \Delta \mid \Theta
    \vdash t : \tau}{\Gamma \mid \Delta \mid \Theta \vdash \lambda x.\, t : A
    \multimap \tau} \qquad \frac{\Gamma \mid \Delta, x : \oc\sigma \mid \Theta
    \vdash t : \tau}{\Gamma \mid \Delta \mid \Theta \vdash \lambda\oc x.\, t :
    \oc\sigma \multimap \tau} \]
\[\text{application rule\footnotemark}\qquad \frac{\Gamma \mid \Delta \mid \Theta
    \vdash t : \sigma \multimap \tau \quad \Gamma' \mid \Delta \mid \Theta
    \vdash u : \sigma}{\Gamma \uplus \Gamma' \mid \Delta \mid \Theta \vdash t\,u
    : \tau} \]
\footnotetext{$\Gamma \uplus \Gamma'$ means $\Gamma \cup \Gamma'$ with the
  assumption that the domains of $\Gamma$ and $\Gamma'$ are disjoint.}
\[\text{quantifier rules\footnotemark}\qquad \frac{\Gamma \mid \Delta \mid \Theta
    \vdash t : S}{\Gamma \mid \Delta \mid \Theta \vdash t : \forall \alpha.\,S}
  \qquad \frac{\Gamma \mid \Delta \mid \Theta \vdash t : \forall \alpha.\,
    S}{\Gamma \mid \Delta \mid \Theta \vdash t : S\{\alpha := A\}} \]
\footnotetext{In the introduction rule (left), $\alpha$ must not appear as a
  free variable in $\Gamma$, $\Delta$ and $\Theta$.}
\[\text{functorial promotion rule}\qquad \frac{\varnothing \mid \varnothing \mid
    \Theta \vdash t : \sigma}{\Gamma \mid \oc\Theta, \Delta \mid \Theta' \vdash
    \oc t : \oc\sigma} \]

In these rules, following the conventions established above, $A$ stands for a
linear type, $S$ stands for a strictly linear type and $\sigma$ and $\tau$ stand
for arbitrary types. In particular, in the quantifier elimination rule, $\alpha$
can only be instantiated by a linear type. So, for instance, one cannot give the
type $\oc\beta \multimap \oc\beta$ to $\lambda x.\, x$ through a quantifier
introduction followed by a quantifier elimination; indeed, as one would expect,
the only normal term of this type is $\lambda\oc x.\, \oc{x}$. (Despite this,
the polymorphism is still impredicative.)

Coming back to the example of Church binary strings, one can show by induction
that
\[ \text{for $w = w_1\ldots w_n \in \{0,1\}^*$},\quad x : \alpha \mid
  \varnothing \mid f_0 : \alpha \multimap \alpha,\, f_1 : \alpha \multimap
  \alpha \vdash f_{w_1}\, (\ldots (f_{w_n}\,x)\ldots) : \alpha \]
and deduce from this that $\vdash \overline{w} : \Str$ (recall that $\Str =
\forall \alpha.\, \oc(\alpha \multimap \alpha) \multimap \oc(\alpha \multimap
\alpha) \multimap \oc(\alpha \multimap \alpha)$).

The system $\muEAlam$ is obtained by extending the grammar of types with $S ::=
\ldots \mid \mu\alpha.\, S$, and adding new derivation rules for the type
fixpoint operator $\mu$:
\[ \text{$\mu$-fold}\quad\frac{\Gamma \mid \Delta \mid \Theta \vdash t :
    S\{\alpha := \mu\alpha.\,S\}}{\Gamma \mid \Delta \mid \Theta \vdash t :
    \mu\alpha.\,S} \qquad \text{$\mu$-unfold}\quad\frac{\Gamma \mid \Delta \mid
    \Theta \vdash t : \mu\alpha.\, S}{\Gamma \mid \Delta \mid \Theta \vdash
    t : S\{\alpha := \mu\alpha.\, S\}} \]

Let us recall two basic properties satisfied both by $\EAlam$ and $\muEAlam$,
all proved in~\cite{Benedetti}.

\begin{proposition}[Stratification and linearity {\cite[Lemma~27]{Benedetti}}]
  \label{prop:stratification}
  Let $t$ be a typable term.
  \begin{itemize}
  \item for any subterm of the form $\lambda\oc x.\, u$ of $t$, all the
    occurrences of $x$ must be at depth 1 in $u$;
  \item for any subterm $\lambda x.\, u$ of $t$, there is \emph{at most one}
    occurrence of $x$ in $u$, whose depth must be 0 in $u$.
  \end{itemize}
  As a consequence, the reduction rules are depth-preserving.
\end{proposition}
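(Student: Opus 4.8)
The plan is to prove, by induction on the typing derivation, a statement that simultaneously controls the multiplicity and the depth of the occurrences of \emph{every} context variable. Precisely, I claim that whenever $\Gamma \mid \Delta \mid \Theta \vdash t : \sigma$ is derivable, then: (a) each $x \in \mathrm{dom}(\Gamma)$ has at most one free occurrence in $t$, lying at depth $0$; (b) every free occurrence in $t$ of a variable of $\mathrm{dom}(\Theta)$ lies at depth $0$; and (c) every free occurrence in $t$ of a variable of $\mathrm{dom}(\Delta)$ lies at depth $1$. (One also uses the routine fact, provable by a parallel induction, that the free variables of $t$ lie in $\mathrm{dom}(\Gamma) \cup \mathrm{dom}(\Delta) \cup \mathrm{dom}(\Theta)$.) Granting this, the proposition follows at once: inspecting the derivation, a subterm $\lambda\oc x.\, u$ is typed by an instance of the $\lambda\oc$-rule (possibly followed by quantifier or, in $\muEAlam$, $\mu$-rules, which leave the term unchanged), and its first bullet is clause (c) applied to that rule's premise $\Gamma \mid \Delta, x{:}\oc\sigma \mid \Theta \vdash u : \tau$; likewise its second bullet is clause (a) applied to the premise $\Gamma, x{:}A \mid \Delta \mid \Theta \vdash u : \tau$ of the $\lambda$-rule typing a subterm $\lambda x.\, u$.

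For the induction, the variable rules are the base cases, and the quantifier rules -- as well as, in $\muEAlam$, $\mu$-fold and $\mu$-unfold -- are trivial since they do not change the term. The two abstraction rules and the application rule introduce no new `$\oc$', so depths are unchanged and one merely tracks how contexts are manipulated: $\Gamma$ is split by $\uplus$, so linear variables are not shared (this is what preserves the \enquote{at most one} in (a)), whereas $\Delta$ and $\Theta$ are copied to both premises of the application rule, which is harmless because (b) and (c) impose no multiplicity bound. The essential case is \emph{functorial promotion}, $\frac{\varnothing \mid \varnothing \mid \Theta \vdash t : \sigma}{\Gamma \mid \oc\Theta, \Delta \mid \Theta' \vdash \oc t : \oc\sigma}$: its premise has empty linear and exponential contexts, so (a) and (c) are vacuous there, while (b) for the premise says that the $\Theta$-variables occur at depth $0$ in $t$, hence at depth $1$ in $\oc t$ -- which is exactly (c) for the variables of $\oc\Theta$ in the conclusion; the extra context $\Gamma, \Delta, \Theta'$ allowed by the rule consists of variables not free in $\oc t$ (by the auxiliary fact), so their clauses hold vacuously.

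The one point I expect to need care is this promotion case, because it is where the bound \enquote{depth $1$, and not more} is actually enforced: a variable bound by $\lambda\oc$ stays in the $\Delta$-component as one climbs the derivation and can only be \enquote{used} by re-entering the $\Theta$-component of a promotion node, which -- since that node's exponential context must be empty -- happens across a single `$\oc$' and no further. But making this rigorous is just a matter of observing that the invariant (b)/(c) is precisely strong enough to go through the rule; no idea is required beyond choosing this invariant.

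Finally, to derive that reduction is depth-preserving, I would inspect the two redexes and then lift to arbitrary contexts (a context places its hole at a fixed depth, which shifts the depths inside both reducts equally, so it does not affect the comparison). For $(\lambda x.\, t)\,u \longrightarrow_\beta t\{x := u\}$: by the second bullet $x$ has at most one occurrence in $t$ and it is at depth $0$, so the substituted copy of $u$ lands at depth $0$ -- the depth it already had in the redex -- and every subterm of $u$ keeps its absolute depth, while subterms of $t$ are unaffected. For $(\lambda\oc x.\, t)\,(\oc u) \longrightarrow_\oc t\{x := u\}$: by the first bullet every occurrence of $x$ in $t$ is at depth $1$, and $u$ sits at depth $1$ inside $\oc u$, so each substituted copy of $u$ again lands at depth $1$ and the absolute depths of its subterms are preserved (all copies, if several, behave alike), subterms of $t$ being again unaffected.
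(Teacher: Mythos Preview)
The paper does not supply its own proof of this proposition: it is merely quoted from \cite{Benedetti} (see the sentence ``all proved in~\cite{Benedetti}'' preceding the statement). There is therefore nothing to compare your argument against within this paper.

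That said, your proof is correct and is exactly the kind of argument one expects. The strengthened invariant you chose --- tracking, for each of the three context zones $\Gamma,\Delta,\Theta$, the depth (and for $\Gamma$ also the multiplicity) of the corresponding free occurrences --- is the right one, and your analysis of the promotion rule is the crux and is handled properly: the fact that the premise has empty $\Gamma$- and $\Delta$-zones is precisely what prevents a $\Delta$-variable from being pushed to depth $\geq 2$. Your derivation of the depth-preservation consequence from the two bullets is also fine. One cosmetic remark: when you say ``inspecting the derivation, a subterm $\lambda\oc x.\,u$ is typed by an instance of the $\lambda\oc$-rule'', this relies on the (easy) observation that every subterm of a typable term is itself typable in some context, which you might state once; but this is routine and does not affect the validity of the argument.
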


\begin{proposition}[$k$-fold functorial promotion
  {\cite[Proposition~28]{Benedetti}}]
  \label{prop:promotion}
  Let $t : \sigma_1 \multimap \ldots \multimap \sigma_n \multimap \tau$ is a
  closed elementary affine $\lambda$-term and $k \geq 1$. There is a term
  $t^{(k)} : \oc^{k}\sigma_1 \multimap \ldots \oc^{k}\sigma_n \multimap
  \oc^{k}\tau$ such that $t^{(k)}\,(\oc^{k}u_1)\,\ldots\,(\oc^{k}u_n)$ and
  $\oc^{k}(t\,u_1\,\ldots\,u_n)$ have the same normal form for all closed
  terms $u_i : \sigma_i$ ($i \in \{1, \ldots, n\}$).
\end{proposition}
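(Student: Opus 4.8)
The plan is to settle the case $k = 1$ (ordinary functorial promotion) by an explicit construction, and then obtain the general statement by iterating this construction $k$ times.

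For $k = 1$, given a closed term $t : \sigma_1 \multimap \ldots \multimap \sigma_n \multimap \tau$, I would take
\[ t^{(1)} \;:=\; \lambda\oc x_1.\,\ldots\,\lambda\oc x_n.\, \oc(t\,x_1\,\ldots\,x_n). \]
The reduction behaviour is immediate: for closed $u_i : \sigma_i$ the term $t^{(1)}\,(\oc u_1)\,\ldots\,(\oc u_n)$ reduces by $n$ successive $\longrightarrow_\oc$ steps to $\oc(t\,u_1\,\ldots\,u_n)$, so by normalization and confluence of $\EAlam$ the two have the same normal form. The substance of the argument is the typing derivation. Since $t$ is closed, weakening (admissible in $\EAlam$, and trivially so at the leaves since the variable rules already allow arbitrary extra context) gives $\varnothing \mid \varnothing \mid \Theta \vdash t : \sigma_1 \multimap \ldots \multimap \sigma_n \multimap \tau$ with $\Theta = x_1 : \sigma_1, \ldots, x_n : \sigma_n$; combining this with $n$ uses of the variable rule for temporary variables and $n$ uses of the application rule yields $\varnothing \mid \varnothing \mid \Theta \vdash t\,x_1\,\ldots\,x_n : \tau$. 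The reason for routing the auxiliary variables through the \emph{temporary} zone $\Theta$ is precisely that the functorial promotion rule now applies: taking its $\Gamma$, $\Delta$, $\Theta'$ all empty, it transforms the above into $\varnothing \mid x_1 : \oc\sigma_1, \ldots, x_n : \oc\sigma_n \mid \varnothing \vdash \oc(t\,x_1\,\ldots\,x_n) : \oc\tau$. Finally $n$ applications of the $\oc$-abstraction rule, discharging $x_n, x_{n-1}, \ldots, x_1$ in that order, give $\vdash t^{(1)} : \oc\sigma_1 \multimap \ldots \multimap \oc\sigma_n \multimap \oc\tau$, as wanted.

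For general $k \geq 1$ I would set $t^{(k+1)} := (t^{(k)})^{(1)}$ -- noting that $t$ closed implies $t^{(k)}$ closed by an immediate induction, so this is legitimate -- and argue by induction on $k$. Applying the $k = 1$ case to $t^{(k)} : \oc^{k}\sigma_1 \multimap \ldots \multimap \oc^{k}\sigma_n \multimap \oc^{k}\tau$ with the closed arguments $\oc^{k}u_i : \oc^{k}\sigma_i$, the term $t^{(k+1)}\,(\oc^{k+1}u_1)\,\ldots\,(\oc^{k+1}u_n)$ has the same normal form as $\oc\bigl(t^{(k)}\,(\oc^{k}u_1)\,\ldots\,(\oc^{k}u_n)\bigr)$; by the induction hypothesis the argument of this outermost $\oc(-)$ has the same normal form as $\oc^{k}(t\,u_1\,\ldots\,u_n)$; and since $\oc(-)$ creates no redex at depth $0$, applying it to two terms with equal normal form again gives terms with equal normal form. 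Chaining these equalities yields the claim for $k+1$.

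I do not anticipate a genuine obstacle: the construction is forced and the reductions are deterministic at the head. The one spot that genuinely requires care is the typing derivation for $t^{(1)}$, specifically checking that after the application steps the judgement really has empty linear and non-linear contexts so that the premise shape $\varnothing \mid \varnothing \mid \Theta$ of the functorial promotion rule is met -- which is exactly what forces the bookkeeping choice of keeping the $x_i$ in the temporary zone. On the semantic side, one should also make sure the auxiliary facts invoked -- admissibility of weakening, confluence, normalization, and stability of "same normal form" under $\oc(-)$ -- are on record for $\EAlam$ and $\muEAlam$; these are all routine.
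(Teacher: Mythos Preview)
Your construction and argument are correct, and indeed this is the standard proof of functorial promotion in elementary affine calculi. However, note that the paper does not actually give its own proof of this proposition: it is stated as a recalled fact and attributed to~\cite[Proposition~28]{Benedetti}, with no argument supplied. So there is nothing in the present paper to compare your attempt against; your write-up simply fills in what the paper takes for granted by citation, and does so in the expected way.
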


\section{The $k$-FEXPTIME hierarchy in $\muEAlam$ (proof of
  Theorem~\ref{thm:aesthetic})}
\label{sec:proof-mueal}

First, the soundness part of Theorem~\ref{thm:aesthetic} follows immediately
from Theorem~\ref{thm:scottfp}.
\begin{proposition}
  All functions represented by $\muEAlam$-terms of type $\oc \Str \multimap
  \oc^{k+1} \Str$ are in \emph{$k$-FEXPTIME}.
\end{proposition}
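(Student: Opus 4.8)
The plan is to reduce this soundness statement to Theorem~\ref{thm:scottfp}, which already gives that $\muEAlam$-terms of type $\oc\Str \multimap \oc^{k+2}\Str_S$ compute exactly $k$-FEXPTIME. The key observation is that there is a simple $\muEAlam$-term $\mathtt{conv} : \Str_S \multimap \oc\Str$ which converts a Scott-encoded binary string back into a Church-encoded one: given a Scott string $u$, one can iterate its destructor to read off the letters, accumulating a Church string; concretely, one writes a term that, on input $u : \Str_S$, performs the pattern-match $u\;f_0\;f_1\;x$ recursively, with $x = \overline{\varepsilon}$ the empty Church string and $f_c$ prepending the letter $c$, and the iteration is what forces the $\oc$ on the output. (This is a standard folklore construction; since $\Str_S$ carries no "length budget", the recursion must be internalized via the recursive type, but the output Church string genuinely lands at depth one.)

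First I would exhibit $\mathtt{conv} : \Str_S \multimap \oc\Str$ explicitly and check it is typable in $\muEAlam$ and that it sends $S(w)$ to (a term with the same normal form as) $\oc\,\overline{w}$. Then, given any term $P : \oc\Str \multimap \oc^{k+1}\Str$, I would build a term of type $\oc\Str \multimap \oc^{k+2}\Str_S$ as follows: precompose nothing on the input side, apply $P$ to get $\oc^{k+1}\Str$, then I need to go from $\oc^{k+1}\Str$ to $\oc^{k+2}\Str_S$. For this I would use the evident term $\mathtt{church\_to\_scott} : \Str \multimap \oc\Str_S$ — Church strings do come with an iteration budget, so one can iterate the Scott successor starting from $S(\varepsilon)$, producing a Scott string at depth one — and lift it via $k{+}1$-fold functorial promotion (Proposition~\ref{prop:promotion}) to a term $\oc^{k+1}\Str \multimap \oc^{k+1}(\oc\Str_S) = \oc^{k+2}\Str_S$. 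Composing, $P' := (\mathtt{church\_to\_scott}^{(k+1)}) \circ P : \oc\Str \multimap \oc^{k+2}\Str_S$. By Theorem~\ref{thm:scottfp}, the function computed by $P'$ is in $k$-FEXPTIME.

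It remains to argue that the function computed by $P$ itself is in $k$-FEXPTIME. But the function computed by $P'$ is, on input $\oc\,\overline{w}$, just the image of $P$'s output under $\mathtt{church\_to\_scott}$, i.e.\ the same binary string in a different encoding; and the decoding of a Scott-encoded string into an actual bit string is computable in linear time (in fact this is implicit in how Theorem~\ref{thm:scottfp} is stated, since its output type is $\Str_S$). Equivalently, and more cleanly, one can compose $P'$ with the promotion of $\mathtt{conv}$ to land back in a Church type and observe the round trip $\mathtt{conv} \circ \mathtt{church\_to\_scott}$ is the identity on strings; so the function computed by $P$ equals the function computed by $P'$ up to a fixed polynomial-time (indeed linear-time) re-encoding of the output, hence lies in $k$-FEXPTIME as well. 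The main obstacle is the bookkeeping of exponential depths in the two conversion terms — making sure $\mathtt{church\_to\_scott}$ really has type $\Str \multimap \oc\Str_S$ (not $\Str \multimap \Str_S$, which would be impossible, nor something deeper) and that the promotion in Proposition~\ref{prop:promotion} produces exactly the depth $k{+}2$ we want; once the depth arithmetic checks out, the rest is a routine appeal to Theorem~\ref{thm:scottfp}.
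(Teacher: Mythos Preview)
Your core argument---obtain a coercion from Church strings to Scott strings at one extra exponential depth, promote it via Proposition~\ref{prop:promotion} to $\oc^{k+1}\Str \multimap \oc^{k+2}\Str_S$, postcompose with $P$, and invoke the soundness half of Theorem~\ref{thm:scottfp}---is correct and is exactly the paper's approach. The only cosmetic difference is that the paper does not construct $\mathtt{church\_to\_scott} : \Str \multimap \oc\Str_S$ by hand; it simply applies the \emph{completeness} half of Theorem~\ref{thm:scottfp} to the identity function (which is in FP) to obtain directly a term of type $\oc\Str \multimap \oc^2\Str_S$, and then promotes that.

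One genuine error to flag, though it turns out to be harmless for the argument: your announced ``key observation'', the term $\mathtt{conv} : \Str_S \multimap \oc\Str$, does not work as you describe. Your sketch (``perform the pattern-match $u\;f_0\;f_1\;x$ recursively, with \ldots\ $f_c$ prepending the letter $c$'') requires term-level recursion to define the $f_c$, and $\muEAlam$ does not provide that: recursive \emph{types} are available, but the calculus is still normalizing, and affinity blocks the usual self-application encoding of a fixpoint combinator. This is precisely why the paper's Lemma~\ref{lem:cast} needs an external $\Nat$ argument to drive the iteration when converting Scott to Church. Fortunately you never actually rely on $\mathtt{conv}$---the main line uses only $\mathtt{church\_to\_scott}$, which does exist (instantiate the input $\Str$ at $\Str_S$ and iterate the Scott successors)---so you should simply drop the opening paragraph about $\mathtt{conv}$ and the ``round trip'' remark at the end.
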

\begin{proof}
  There exists a coercion $\oc\Str \multimap \oc^2\Str_S$ (by completeness part
  of Theorem~\ref{thm:scottfp} applied to the identity function in FP) which
  lifts by functorial promotion (Proposition~\ref{prop:promotion}) to
  $\oc^{k+1}\Str \multimap \oc^{k+2}\Str_S$. So any function represented by a
  term of type $\oc\Str \multimap \oc^{k+1}\Str$ is also represented by a term
  of type $\oc\Str \multimap \oc^{k+2} \Str_S$. Thus the soundness part of
  Theorem~\ref{thm:scottfp} applies.
\end{proof}

For the extensional completeness, we also take Theorem~\ref{thm:scottfp} as our
starting point. The idea is to convert $\oc\Str_S$ into $\Str$ with the help of
an auxiliary integer which provides an upper bound on the length of the string.
(Similar ideas appear in~\cite{Ghyselen}.)

We shall use the type of \emph{Church natural numbers} and the usual
second-order encoding of pairs:
\[\Nat = \forall \alpha.\, \oc(\alpha \multimap \alpha) \multimap \oc(\alpha
  \multimap \alpha) \qquad \sigma \otimes \tau = \forall
  \alpha.\, (\sigma \multimap \tau \multimap \alpha) \multimap \alpha\]
The aforementioned upper bound will be an inhabitant of the type $\Nat$. An
integer $n \in \mathbb{N}$ is represented in $\Nat$ by the iterator $f \mapsto
f^n$ (formally, $\overline{n} = \lambda\oc f.\, \oc(\lambda x.\, f\,(\ldots
(f\, x) \ldots))$ with $n$ times $f$).

To help readability we extend the syntax with the abbreviation
\begin{itemize}
\item $u \otimes v := \lambda f.\, f\,u\,v$ so that $u\otimes v :
  \sigma\otimes\tau$ if $u : \sigma$ and $v : \tau$
\end{itemize}
given in~\cite{Benedetti}, and introduce some additional syntactic sugar:
\begin{itemize}
\item $\mathtt{let}\; x \otimes y \leftarrow u\;\mathtt{in}\;t := u\,(\lambda
  x.\, \lambda y.\, t)$ for $u : \sigma \otimes \tau$, and $\lambda(x \otimes
  y).\,t := \lambda z.\,\mathtt{let}\; x \otimes y \leftarrow z\;\mathtt{in}\;t$
\item $\mathtt{case}\; u \mid \mathtt{0}x \mapsto a \mid \mathtt{1}y \mapsto b
  \mid \varepsilon \mapsto c := u\,(\lambda x.\, a)\,(\lambda y.\, b)\,c$ for $u
  : \Str_S$
\end{itemize}
The affine projections $\pi_i = \lambda(x_1 \otimes x_2).\, x_i$ ($i \in
\{1,2\}$) are also defined in~\cite{Benedetti}.
\begin{remark}
  Our definition of $\lambda(x \otimes y).\,t$ is much simpler that the one
  given in~\cite{Benedetti}, but the drawback is that it only works when the
  type of $t$ is linear, i.e.\ its head connective is not an exponential.
  Indeed, $u : \sigma \otimes \tau$ can be instantiated to $u : (\sigma
  \multimap \tau \multimap A) \multimap A$ by the quantifier elimination rule
  only when $A$ is linear. This condition will hold in our use cases below.
\end{remark}

Now that we are equipped with all these data types, we can make progress on our proof.

\begin{lemma}
  \label{lem:cast}
  There exists a $\muEAlam$-term $\mathtt{cast} : \Nat \multimap \oc \Str_S
  \multimap \Str$ which converts a Scott encoding into a Church encoding, provided
  that the integer argument is greater or equal to the length of the string.
\end{lemma}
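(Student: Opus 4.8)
The plan is to realise $\mathtt{cast}$ as an \emph{iterator} driven by the Church numeral of type $\Nat$: we run a ``read one letter off the Scott string'' step $n$ times, accumulating the Church encoding as a function $\alpha \multimap \alpha$ along the way. Once the string has been exhausted the remaining steps do nothing, which is exactly why the bound $n \geq |w|$ is needed -- with a smaller $n$ one would get the Church encoding of a prefix of $w$ instead.

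Concretely, fix a type $\alpha$ together with $f_0, f_1 : \alpha \multimap \alpha$ (these are the arguments the output Church string will eventually be applied to, so in the final term they are bound by $\lambda\oc f_0.\,\lambda\oc f_1$), and let $\beta := \Str_S \otimes (\alpha \multimap \alpha)$ be the type of iteration states, pairing a residual Scott string with the function accumulated so far. I want a term $\mathtt{step} : \beta \multimap \beta$ which, on a state $u \otimes g$, pattern-matches on $u$: for $u = c \cdot v$ it returns $v \otimes (\lambda x.\, g\,(f_c\, x))$ -- post-composing $f_c$, so that after $w_1 \ldots w_k$ have been consumed the accumulator is $g \circ f_{w_1} \circ \cdots \circ f_{w_k}$ -- and for $u$ empty it returns $S(\varepsilon) \otimes g$ unchanged. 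Then $\mathtt{cast}$ is, morally, \[ \lambda n.\, \lambda\oc s.\, \lambda\oc f_0.\, \lambda\oc f_1.\; \mathtt{let}\; \oc F \leftarrow n\,(\oc\,\mathtt{step}) \;\mathtt{in}\; \oc\big(\pi_2\,(F\,(s \otimes (\lambda x.\, x)))\big), \] where $n$ is taken at the type instance $\beta$ of $\Nat$. Correctness is then a short computation for a single application of $\mathtt{step}$ (a case split on the leading letter), followed by an induction on $n$ with the accumulator $g$ left general, showing that iterating $\mathtt{step}$ $n$ times on $S(w) \otimes g$ normalises to $S(\varepsilon) \otimes (g \circ f_{w_1} \circ \cdots \circ f_{w_{|w|}})$ as soon as $n \geq |w|$; taking $g$ to be the identity and projecting out the second component, $\mathtt{cast}\,\overline{n}\,(\oc\,S(w))$ then normalises to $\overline{w}$.

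Two points require care. The first is the affine discipline: $g$ is a \emph{linear} variable, hence by Proposition~\ref{prop:stratification} it may not occur in all three branches of the pattern-match at once. The remedy is to let the pattern-match return a \emph{function} of the accumulator: the three branches are given the common type $(\alpha \multimap \alpha) \multimap \beta$, which is linear -- so instantiating the Scott destructor at this type is legitimate -- and the result is then applied to $g$; this way each $f_c$ occurs exactly once (in its own branch, the two non-empty branches having disjoint free variables) and $g$ occurs exactly once, as the argument. The second point, and the one I expect to be the real work, is the exponential-depth bookkeeping: $\mathtt{cast}$ must land in $\Str$ with no superfluous `$\oc$'. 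This is what dictates the shape displayed above -- the single `$\oc$' guarding the output of $n$ is eliminated by the $\mathtt{let}$, while a fresh one is reintroduced around the body -- and the functorial promotion rule is then what lets $F$, $s$, $f_0$ and $f_1$ be used at their unboxed types underneath these modalities, at depth $1$, exactly as $\Str$ demands. Arranging all these `$\oc$'s to line up (in particular, so that the `$\oc$' of the input $\oc\Str_S$ is the one under which the iteration is performed) is the delicate part of pinning down the actual term; once it typechecks, everything else is routine.
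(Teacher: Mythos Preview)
Your proposal is correct and takes essentially the same approach as the paper: iterate a one-letter-consuming step over a state pairing the residual Scott string with an accumulator of type $\alpha \multimap \alpha$ (the paper uses the pair order $(\alpha \multimap \alpha) \otimes \Str_S$, but this is cosmetic), with exactly the depth bookkeeping you describe. The only minor difference is in the linearity fix you flag: the paper has the $\mathtt{case}$ return the pair $f_c \otimes v$ and then uses the accumulator $h$ once \emph{outside} to form $(\lambda x.\, h\,(f\,x)) \otimes v$, whereas you make the $\mathtt{case}$ return a function of the accumulator and apply it to $g$ --- both are standard ways to keep a linear variable out of a branching construct, and either works here.
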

\begin{proof}

  Our implementation of $\mathtt{cast}$ instantiates the input $\Nat$ on
  $(\alpha \multimap \alpha) \otimes \Str_S$ where $\alpha$ is the eigenvariable
  of the $\forall$ in the output $\Str$ (recall that $S(\varepsilon)$ refers to the
  Scott encoding of the empty word):
  \[ \mathtt{cast} = \lambda n.\,\lambda\oc w.\,\lambda\oc f_0.\, \lambda\oc
    f_1.\, \mathtt{let}\; \oc g \leftarrow n\, \oc(\lambda(h \otimes u).\, t)
    \;\mathtt{in}\; \oc(\pi_1\,(g\, ((\lambda x.\, x) \otimes w))) \]
  \[   \text{with}\ t = \mathtt{let}\; f \otimes v \leftarrow
  (\mathtt{case}\; u \mid \mathtt{0}v \mapsto f_0 \otimes v \mid \mathtt{1}v \mapsto
    f_1 \otimes v \mid \varepsilon \mapsto (\lambda z.\, z) \otimes S(\varepsilon))
  \;\mathtt{in}\; (\lambda x.\,
  h\,(f\,x)) \otimes v
  \]
  To explain this functional program, let us reformulate it as an imperative
  algorithm: $t$ can be considered as the body of a \texttt{for} loop which
  alters two mutable variables $h : (\alpha \multimap \alpha)$ and $u : \Str_S$.
  At each iteration, if $u$ is non-empty, its first letter is popped (viewing
  $u$ as a mutable stack) and $h$ is post-composed with either $f_0$ or $f_1$
  depending on this letter.
  
  After $n$ iterations starting from $h = (\lambda x.\, x)$ and $u = w$, if $w$
  is the Scott encoding of $w_1 \ldots w_m$, the result obtained is $(f_{w_1}
  \circ \ldots \circ f_{w_N}) \otimes (S(w_{N+1} \ldots w_m))$ where $N =
  \min(n,m)$. In particular, if $n \geq m$, the first component will be $f_{w_1}
  \circ \ldots \circ f_{w_m}$ -- which corresponds to the definition of the
  Church encoding.
\end{proof}

To obtain the desired upper bound, we recall a lemma from~\cite{Benedetti}. It
is used in the proof of Theorem~\ref{thm:scottfp} in order to simulate Turing
machines.

\begin{lemma}[{\cite{Benedetti}}]
  Let $\mathcal{M}$ be a $k$-FEXPTIME Turing machine. There is a $\EAlam$-term
  $t_{\mathcal{M}} : \oc\Str \multimap \oc^{k+1}\Nat$ computing an upper bound
  on the running time of $\mathcal{M}$ on the given input string.
\end{lemma}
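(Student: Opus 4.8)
The plan is to build $t_{\mathcal{M}}$ by composing three standard arithmetic gadgets --- length extraction, a polynomial, and an iterated exponential --- arranged so that the output carries exactly $k+1$ exponential modalities. Write $E_0(m) = m$ and $E_{i+1}(m) = 2^{E_i(m)}$ for the height-$i$ exponential tower; a $k$-FEXPTIME machine runs, on an input of length $n$, in at most $E_k(p(n))$ steps for some polynomial $p$ depending on $\mathcal{M}$. So it suffices to produce, from the input of type $\oc\Str$, a term of type $\oc^{k+1}\Nat$ whose normal form on a Church-encoded string $\overline{x}$ is the Church numeral $\overline{E_k(p(|x|))}$.

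The first gadget is a \enquote{length} term $\mathtt{len} = \lambda w.\, \lambda\oc f.\, w\,(\oc f)\,(\oc f)$ of type $\Str \multimap \Nat$: since the body of $\Str$ differs from that of $\Nat$ only by one extra iterator argument, a Church string that is handed the \emph{same} boxed function for both of its letter arguments reduces to the Church numeral of its length. The point worth stressing is that this extraction costs \emph{no} exponential modality, so that lifting $\mathtt{len}$ by functorial promotion to accept a boxed input yields a term of type $\oc\Str\multimap\oc\Nat$; a more naive length gadget --- feed the string the boxed successor, then unbox and apply it to $\overline{0}$ --- already lives at type $\Str\multimap\oc\Nat$ and would contribute one spurious `$\oc$' after promotion, propagating all the way to the final type. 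Getting this bound tight is the one place where I would check the stratification constraints of Proposition~\ref{prop:stratification} most carefully.

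Next come the arithmetic gadgets. Addition and multiplication of Church numerals are $\EAlam$-definable with type $\Nat\multimap\Nat\multimap\Nat$ in the usual way, so with numeral constants one obtains, for any fixed polynomial $p$, a term computing $p$ at the exponential-free level. Iterated exponentiation is the classical depth-increasing construction: instantiating a Church numeral on the type $\Nat$ itself and iterating the doubling map $\mathtt{double} : \Nat\multimap\Nat$ from the starting value $\overline{1}$ gives $\mathtt{exp} : \Nat\multimap\oc\Nat$ with $\mathtt{exp}\,\overline{m}$ normalizing to $\oc\,\overline{2^m}$; promoting it $j$ times (Proposition~\ref{prop:promotion}) gives $\mathtt{exp}^{(j)} : \oc^{j}\Nat\multimap\oc^{j+1}\Nat$, and composing $\mathtt{exp}, \mathtt{exp}^{(1)}, \dots, \mathtt{exp}^{(k-1)}$ --- legitimate since the composite keeps the linear domain $\Nat$ --- yields $\mathtt{tower}_k : \Nat\multimap\oc^{k}\Nat$ computing $m\mapsto E_k(m)$.

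Finally, I would assemble everything inside a single functorial promotion: take $t_{\mathcal{M}} = \lambda\oc w.\, \oc(\mathtt{tower}_k\, N_w)$, where $N_w : \Nat$ is the Church numeral $\overline{p(|w|)}$, built --- at depth $0$ inside the box, where $w : \Str$ is available and, sitting in the \enquote{temporary} context, may be used as often as needed --- out of copies of $\mathtt{len}\,w$ combined by the arithmetic gadgets. Then $\mathtt{tower}_k\, N_w : \oc^{k}\Nat$, the enclosing box makes $t_{\mathcal{M}} : \oc\Str\multimap\oc^{k+1}\Nat$, and $t_{\mathcal{M}}\,(\oc\overline{x})$ normalizes to $\oc^{k+1}\,\overline{E_k(p(|x|))}$, an upper bound on the running time of $\mathcal{M}$ on $x$. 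Since no pairs and no type fixpoints are involved, this is genuinely an $\EAlam$-term. I do not expect any conceptual obstacle; the one thing requiring care is the bookkeeping of exponential depths, in particular verifying that they sum to exactly $k+1$ --- one for the outer promotion, $k$ spent internally by $\mathtt{tower}_k$, and none for the length extraction or the polynomial.
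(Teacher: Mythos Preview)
The paper does not prove this lemma: it is merely recalled from~\cite{Benedetti} (where the underlying construction goes back to~\cite{baillot}), so there is no \enquote{paper's own proof} to compare against. Your sketch is correct and is essentially the standard argument one finds in those references: extract the length at type $\Str \multimap \Nat$ with no exponential cost, compute a fixed polynomial using the depth-preserving addition and multiplication on Church numerals, then stack $k$ copies of the depth-increasing $\mathtt{exp} : \Nat \multimap \oc\Nat$ via functorial promotion, and wrap the whole thing in one outer box. Your typing of $\mathtt{len} = \lambda w.\,\lambda\oc f.\, w\,(\oc f)\,(\oc f)$ is fine (the two occurrences of $\oc f$ are licensed because $f$ sits in the non-linear context $\Delta$, which is shared across applications), and your depth accounting --- one `$\oc$' from the outer promotion, $k$ from $\mathtt{tower}_k$, none from length or polynomial --- is exactly right.
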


We now have all the ingredients for the extensional completeness proof.

\begin{theorem}
  All \emph{$k$-FEXPTIME} functions can be represented by $\muEAlam$-terms of
  type $\oc\Str \multimap \oc^{k+1}\Str$.
\end{theorem}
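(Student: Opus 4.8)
The plan is to combine Theorem~\ref{thm:scottfp} with the $\mathtt{cast}$ term of Lemma~\ref{lem:cast} and the time-bound term $t_{\mathcal M}$, gluing them together via functorial promotion. Concretely, let $f$ be a $k$-FEXPTIME function. I would first view $f$ as computed by a $k$-FEXPTIME Turing machine $\mathcal M$ which outputs its result on the tape; by padding I may assume the output length is bounded by the same time bound. Theorem~\ref{thm:scottfp} gives a $\muEAlam$-term $F : \oc\Str \multimap \oc^{k+2}\Str_S$ representing $f$ (with output Scott-encoded), and the time-bound lemma gives $t_{\mathcal M} : \oc\Str \multimap \oc^{k+1}\Nat$. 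The goal is to feed the Scott string together with the integer bound into $\mathtt{cast}$ to recover a Church string, all at the right exponential depth.

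The key steps, in order: first, promote $\mathtt{cast} : \Nat \multimap \oc\Str_S \multimap \Str$ to depth $k+1$ using Proposition~\ref{prop:promotion}, obtaining $\mathtt{cast}^{(k+1)} : \oc^{k+1}\Nat \multimap \oc^{k+1}(\oc\Str_S) \multimap \oc^{k+1}\Str$, i.e.\ $\oc^{k+1}\Nat \multimap \oc^{k+2}\Str_S \multimap \oc^{k+1}\Str$. Second, build a term that, given the input $\oc\Str$, duplicates it so as to run both $F$ and $t_{\mathcal M}$ on it — this is where the context discipline matters, and it is legitimate because $\oc$-typed variables live in the $\Delta$ context and may be used multiple times (the input is $\oc\Str$, not $\Str$). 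Concretely I would write something like $\lambda\oc x.\, \mathtt{let}\; \oc n \leftarrow t_{\mathcal M}\,(\oc x)\; \ldots$; more carefully, since $t_{\mathcal M}\,(\oc x) : \oc^{k+1}\Nat$ and $F\,(\oc x) : \oc^{k+2}\Str_S$, I apply $\mathtt{cast}^{(k+1)}$ to these two, which typechecks and yields a term of type $\oc^{k+1}\Str$. Third, verify that the normal form is indeed $\oc^{k+1}\overline{f(w)}$: by Proposition~\ref{prop:promotion} the promoted $\mathtt{cast}^{(k+1)}$ applied to $\oc^{k+1}(\text{bound})$ and $\oc^{k+1}(\oc\,S(f(w)))$ normalizes to $\oc^{k+1}(\mathtt{cast}\,(\text{bound})\,(\oc\,S(f(w))))$, and since the bound exceeds $|f(w)|$, Lemma~\ref{lem:cast} says this is $\oc^{k+1}\overline{f(w)}$.

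The main obstacle I expect is bookkeeping the exponential depths and the context splits so that everything typechecks: one must be careful that $t_{\mathcal M}$ and $F$ both take the \emph{same} $\oc\Str$ input (so the duplication happens at the level of an $\oc$-variable, which the typing rules permit), and that the depths $k+1$ versus $k+2$ line up exactly with what $\mathtt{cast}^{(k+1)}$ consumes — note $\mathtt{cast}$'s second argument is itself already $\oc\Str_S$, so after $(k+1)$-fold promotion it wants $\oc^{k+2}\Str_S$, which is precisely the output type of $F$. A secondary subtlety is ensuring the Turing machine's output string is available with a length bound, which the padding argument handles, and checking that the composition of $t_{\mathcal M}$ with the implicit use of the input respects the affine constraints; but since the duplicable input sits in the $\Delta$ context throughout, this is routine. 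Once the depths are confirmed to match, the correctness follows immediately from Lemma~\ref{lem:cast}, Theorem~\ref{thm:scottfp}, and Proposition~\ref{prop:promotion}, with no further computation needed.
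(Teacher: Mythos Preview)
Your proposal is correct and follows essentially the same approach as the paper: both use Theorem~\ref{thm:scottfp} to obtain a term into $\oc^{k+2}\Str_S$, the time-bound lemma for $t_{\mathcal M}$, and the $(k+1)$-fold promotion of $\mathtt{cast}$, combining them as $\lambda\oc w.\, \mathtt{cast}^{(k+1)}\,(t_{\mathcal M}\,\oc w)\,(F\,\oc w)$. One minor remark: the padding step is unnecessary, since for any Turing machine the output length is automatically bounded by the running time.
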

\begin{proof}
  Consider any function computed by a $k$-FEXPTIME Turing machine $\mathcal{M}$.
  By the completeness part of Theorem~\ref{thm:scottfp}, we can choose a
  $\muEAlam$-term $f : \oc \Str \multimap \oc^{k+2} \Str_S$ computing this
  function. We also choose a term $t_{\mathcal{M}}$ satisfying the conditions of
  the above lemma. Then the term
  \[ \lambda\oc w.\, \mathtt{cast}^{(k+1)}\,(t_{\mathcal{M}}\,\oc w)\,(f\,\oc w)
    : \oc\Str \multimap \oc^{k+1}\Str \]
  -- where $\mathtt{cast}^{(k+1)}$ is the $(k+1)$-fold functorial promotion of
  $\mathtt{cast}$ -- computes the same function as $\mathcal{M}$. Indeed, the
  assumption of Lemma~\ref{lem:cast} is satisfied, since for a Turing machine,
  the length of the output is bounded by the running time.
\end{proof}

\section{Regular languages in $\EAlam$ (proof of Theorem~\ref{thm:elam-reg})}
\label{sec:regular}

In this section, we wish to show that, in $\EAlam$ (\emph{without} fixpoints):
\begin{itemize}
\item all terms $t : \oc\Str \multimap \oc\oc\Bool$ decide regular languages;
\item moreover, all regular languages can be decided by terms $t : \Str
  \multimap \oc\Bool$.
\end{itemize}
By functorial promotion, the class of languages characterized by $\Str \multimap
\oc\Bool$ is included in the class corresponding to $\oc\Str \multimap
\oc\oc\Bool$, so this will entail that both are exactly the class of regular
languages. The situation is the opposite of the previous section: the second
item (extensional completeness) is easy, while the first (soundness) is hard.

Regular languages admit many well-known equivalent definitions, e.g.\ regular expressions
and finite automata (with many variants: non-determinism, bidirectionality,
etc.). The classic characterization which will prove useful for us is:
\begin{theorem}
  \label{thm:monoid}
  A language is regular if and only if it can be expressed as $\varphi^{-1}(S)$,
  where $\varphi : \{0,1\}^* \to M$ is a monoid morphism, $M$ is a \emph{finite}
  monoid and $S \subseteq M$.
\end{theorem}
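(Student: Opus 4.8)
The plan is to prove both directions of this classical equivalence. For the ``if'' direction, suppose $L = \varphi^{-1}(S)$ for a monoid morphism $\varphi : \{0,1\}^* \to M$ with $M$ finite and $S \subseteq M$. I would build a deterministic finite automaton whose state set is $M$, with initial state the unit $1_M$, accepting states $S$, and transition function $\delta(m, c) = m \cdot \varphi(c)$ for $c \in \{0,1\}$. A straightforward induction on the length of a word $w$ shows that the run of this automaton on $w$ ends in state $\varphi(w)$; hence the automaton accepts exactly $\varphi^{-1}(S) = L$, so $L$ is regular.

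For the ``only if'' direction, suppose $L$ is regular, witnessed by a DFA with state set $Q$, initial state $q_0$, accepting set $F$, and transition function $\delta : Q \times \{0,1\} \to Q$. The key idea is to take $M$ to be the \emph{transition monoid}: the submonoid of $\mathrm{End}(Q)$ (functions $Q \to Q$ under composition) generated by the two maps $\delta(-, 0)$ and $\delta(-, 1)$. Since $Q$ is finite, $\mathrm{End}(Q)$ is finite, hence so is $M$. Define $\varphi : \{0,1\}^* \to M$ by sending a word $w = w_1 \ldots w_n$ to the composite $\delta(-, w_n) \circ \cdots \circ \delta(-, w_1)$ (reading the word left-to-right as successive state updates); one checks that $\varphi$ respects concatenation and the empty word, so it is a monoid morphism. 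Finally, put $S = \{ f \in M \mid f(q_0) \in F \}$. Then $w \in \varphi^{-1}(S)$ iff applying the transitions of $w$ to $q_0$ lands in $F$, which is exactly the acceptance condition, so $\varphi^{-1}(S) = L$.

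Neither direction presents a genuine obstacle — this is textbook automata theory (the Myhill--Nerode / syntactic monoid circle of ideas), and the only point requiring a little care is the bookkeeping of the order of composition in the definition of $\varphi$, so that $\varphi(uv) = \varphi(v) \circ \varphi(u)$ matches the left-to-right processing of the input while still making $\varphi$ a bona fide morphism into $(M, \circ)$. One could alternatively equip $M$ with the opposite multiplication to make the direction cosmetically cleaner; this is purely a matter of presentation.
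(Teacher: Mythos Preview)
Your proof is correct and entirely standard. However, the paper does not actually prove Theorem~\ref{thm:monoid}: it is introduced with the words ``The classic characterization which will prove useful for us is'' and then simply stated, with the paper immediately moving on to use it. So there is no paper proof to compare against; the result is treated as folklore from automata theory (the syntactic monoid / transition monoid characterization), which is exactly what you have supplied. Your write-up would serve perfectly well as the missing justification, with the only cosmetic point being the one you already flag about the order of composition versus the monoid law.
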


\subsection{Extensional completeness}

\begin{proposition}
  All regular languages can be decided by $\EAlam$-terms of type $\Str \multimap
  \oc\Bool$.
\end{proposition}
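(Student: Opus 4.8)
The plan is to encode a finite monoid directly in $\EAlam$ and use it to decide a regular language via the characterization in Theorem~\ref{thm:monoid}. Fix a regular language $L = \varphi^{-1}(S)$ with $\varphi : \{0,1\}^* \to M$ a morphism into a finite monoid $M = \{m_1,\dots,m_p\}$ and $S \subseteq M$. I would represent an element of $M$ as a Church-style ``selector'' of type $M_{\EAlam} = \forall\alpha.\, \underbrace{\alpha\multimap\dots\multimap\alpha}_{p}\multimap\alpha$ (the $p$-fold variant of $\Bool$), i.e.\ $m_i$ becomes $\lambda x_1.\dots\lambda x_p.\, x_i$. This type is strictly linear, so the second-order quantifier is legal, and it has exactly $p$ closed normal inhabitants, one per monoid element.

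Next I would build, from the multiplication table of $M$, a closed term $\mathtt{mult} : M_{\EAlam} \multimap M_{\EAlam} \multimap M_{\EAlam}$ and constant terms $e : M_{\EAlam}$ (the unit) and $g_0, g_1 : M_{\EAlam}$ for $\varphi(0)$ and $\varphi(1)$; since $M$ is finite these are all definable by nested case-analyses (using the selector to pick among the finitely many hard-coded products). The key step is then to iterate: from a Church string $\overline{w} : \Str$ I want to compute $\varphi(w_1)\cdots\varphi(w_n) = \varphi(w)$. Instantiating $\Str$ at $\alpha := M_{\EAlam}$ gives a term of type $\oc(M_{\EAlam}\multimap M_{\EAlam})\multimap \oc(M_{\EAlam}\multimap M_{\EAlam})\multimap\oc(M_{\EAlam}\multimap M_{\EAlam})$; feeding it $\oc(\lambda x.\,\mathtt{mult}\,x\,g_0)$ and $\oc(\lambda x.\,\mathtt{mult}\,x\,g_1)$ and then, inside the resulting $\oc$, applying the function to $e$ yields a term of type $\oc M_{\EAlam}$ whose normal form is $\overline{\varphi(w)}$ (one has to be slightly careful about left-vs-right multiplication and about where the $\lambda\oc$ binders sit, but this is a routine Church-iteration argument of the kind already used for $\overline{w}$ in Section~\ref{sec:definitions}). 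Finally, a closed term $\chi_S : M_{\EAlam}\multimap\Bool$ testing membership in $S$ (again a finite case-analysis) composes with this, under $\oc$, to give a term $t : \Str\multimap\oc\Bool$ with $t\,\overline{w}$ reducing to the boolean $\overline{w\in L}$.

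Finally I would check that this term has the claimed type in $\EAlam$: the only subtle points are that $M_{\EAlam}$ is strictly linear (so quantifier introduction/elimination applies), that the iteration terms genuinely typecheck under the functorial-promotion rule (the arguments $\lambda x.\,\mathtt{mult}\,x\,g_c$ live in the ``temporary'' context and get promoted), and that no $\mu$ is needed anywhere — indeed everything is exponential-free except the single $\oc$ forced by the $\Str$ encoding. I do not expect a genuine obstacle here; the main care is bookkeeping the three-zone contexts $\Gamma\mid\Delta\mid\Theta$ correctly when typing the iterator application and making sure $\mathtt{mult}$, $\chi_S$ etc.\ are typed with empty linear context so they can be used freely. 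The statement with $\oc\Str\multimap\oc\oc\Bool$ then follows by functorial promotion, as already noted in the text.
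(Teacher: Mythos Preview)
Your proposal is correct and essentially matches the paper's proof: the paper also encodes $M$ by the $p$-ary selector type $\forall\alpha.\,\alpha\multimap\cdots\multimap\alpha$, feeds the Church string the left-multiplication-by-$\varphi(c)$ maps (its $\delta_c = \lambda m.\,m\,m_{\varphi(c)\cdot 1}\ldots m_{\varphi(c)\cdot p}$ is exactly your $\lambda x.\,\mathtt{mult}\,g_c\,x$ unfolded into a case table), applies the result to the unit, and post-composes with $\chi_S$. The only cosmetic difference is that the paper writes each $\delta_c$ directly rather than factoring through a general $\mathtt{mult}$.
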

\begin{proof}
  Let $\varphi : \{0,1\}^* \to M$ be a morphism to a finite monoid $M$. Without
  loss of generality, we may assume that the underlying set of $M$ is $\{1,
  \ldots, k\}$, and the identity element of the monoid is $1$. We represent the
  monoid elements in $\EAlam$ as inhabitants of the type $\mathtt{M} = \forall
  \alpha.\, \alpha \multimap \ldots \alpha \multimap \alpha$; the element $i$ is
  mapped to the term $m_i = \lambda x_1.\, \ldots\, \lambda x_k.\, x_i$. We
  define:
  \begin{itemize}
  \item $\delta_c = \lambda m.\,m\,m_{\varphi(c)\cdot 1}\,
    \ldots\,m_{\varphi(c) \cdot k} : \mathtt{M} \multimap \mathtt{M}$ for $c
    \in \{0,1\}$
  \item for $S \subseteq M$, $\chi_S = \lambda m.\, m\,b_1\,\ldots\,b_k :
    \mathtt{M} \multimap \Bool$ where $b_i = \mathtt{true}$ (resp.
    $\mathtt{false}$) if $i \in S$ (resp.\ $i \notin S$).
  \end{itemize}
  Then the language $\varphi^{-1}(S)$ is decided by the term $\lambda w.\,
  \mathtt{let}\; \oc d \leftarrow w\,\oc\delta_0\,\oc\delta_1 \;\mathtt{in}\;
  \oc(\chi_S\,(d\,m_1))$.
\end{proof}

Next, to prepare the ground for our proof of soundness in $\EAlam$, we review our
direct inspiration in the simply typed $\lambda$-calculus: the proof of one
direction of Theorem~\ref{thm:hk}. The goal is to show that any simply typed
$\lambda$-term $t : \Str_{\STlam}[A] \to \Bool_\STlam$, where $A$ is an
arbitrary simple type, decides a language $\Lang_\STlam(t)$ which is
\emph{regular}. This was done using automata in~\cite{HillebrandKanellakis}, but
we find it simpler to work with monoid morphisms (though this is, in the end,
merely a different presentation of the same proof).

\subsection{A short soundness proof for Hillebrand and Kanellakis's theorem
  (sketch)}

We shall omit the subscripts in the types $\Str_\STlam[A]$ and $\Bool_\STlam$ in
this subsection.

Let us fix a simple type $A$. The fundamental idea is that, given any
\emph{denotational semantics} $\denot{-}$:
\begin{itemize}
\item the denotation $\denot{\overline{w}} \in \denot{\Str[A]}$ of the encoding
  of $w \in \{0,1\}^*$ is enough to determine $\denot{t\,\overline{w}} \in
  \denot{\Bool}$ -- this is simply the compositionality of the semantics;
\item provided the semantics is \emph{non-trivial}, i.e.\ $\denot{\mathtt{true}}
  \neq \denot{\mathtt{false}}$, this subsequently determines $t\, \overline{w}$.
\end{itemize}
Formally, let us define $\varphi_A : \{0,1\}^* \to \denot{\Str[A]}$ by $\varphi_A(w) =
\denot{\overline{w}}$; then if $\denot{-}$ is non-trivial,
\[ \Lang_\STlam(t) = \varphi_A^{-1}(\{\omega \in \denot{\Str[A]} \mid
  \denot{t}(\omega) = \denot{\mathtt{true}} \}) \]
To show that $\Lang_\STlam(t)$ is regular, we shall apply
Theorem~\ref{thm:monoid} to this equation. We must make sure that:
\begin{itemize}
\item $\denot{\Str[A]}$ can be endowed with a monoid structure, in such a way
  that $\varphi$ is a monoid morphism -- this is caused by the use of
  \emph{Church encodings};
\item $\denot{\Str[A]}$ is finite -- thanks to the existence of a \emph{finite
    semantics} for the simply typed $\lambda$-calculus.
\end{itemize}

Our choice of semantics, to satisfy both conditions, is the usual interpretation
of types by mere sets (called the \enquote{full type frame}
in~\cite{HillebrandKanellakis}): $\denot{A \to B} = \denot{B}^{\denot{A}}$, with
$\denot{o} = \{0,1\}$ for the base type. Any choice for $\denot{o}$ with at
least two elements makes the semantics non-trivial. Furthermore, since
$\denot{o}$ is finite, the denotations of all types are also finite.

Finally, in order to define a monoid structure on $\denot{\Str[A]}$, observe that
\[ \denot{\Str[A]} = \left(\denot{A \to A}^{\denot{A \to A}}\right)^{\denot{A
      \to A}} \cong \End(\denot{A})^{\End(\denot{A})^2} \]
where $\End(\denot{A})$ is the monoid of maps from $\denot{A}$ to itself, endowed
with function composition. Thus, the right-hand side can be seen as a product of
monoids. Proving that $\varphi$ is a morphism can then be done componentwise;
the condition to be checked can be expressed as:
\[ \forall (f_0, f_1) \in \End(\denot{A})^2,\; \left(w \mapsto
    \denot{\overline{w}}(f_0,f_1)\right) \text{ is a morphism } \{0,1\}^*
  \to \End(\denot{A}) \]
By definition, $\overline{w} = \lambda f_0.\, \lambda f_1.\, \lambda x.\,
f_{w_1}\,(\ldots (f_{w_n}\, x) \ldots)$ (where $w = w_1 \ldots w_n$) so
\[ \forall (f_0, f_1) \in \End(\denot{A})^2,\; \denot{\overline{w}}(f_0,f_1) =
  f_{w_1} \circ \ldots \circ f_{w_n} \]
therefore $\varphi$ is none other than the product, over all $(f_0,f_1)
\in \End(\denot{A})^2$, of the monoid morphisms $\{0,1\}^* \to \End(\denot{A})$
defined by $c \mapsto f_c$ for $c \in \{0,1\}$.

\begin{remark}
  This reasoning can be made to work with any finite semantics of $\STlam$, not
  just sets. An interesting choice is the \enquote{linearized Scott
    model}\footnote{This model is obtained from a semantics of linear logic as
    its exponential co-Kleisli category, i.e.\ via the translation $A \to B :=
    \oc{A} \multimap B$. The resulting category embeds fully and faithfully into
    the usual category of Scott domains and continuous functions, hence the
    name. See~\cite{TeruiSemantics} for a short self-contained definition.}: as
  remarked by Terui~\cite{TeruiSemantics}, in that semantics, the points in the
  denotation of a Church-encoded word correspond to nondeterministic finite
  automata accepting that word. This idea is also at the heart of Grellois and
  Melliès's semantic approach to higher-order model
  checking~\cite{GrelloisMellies,grellois}.
\end{remark}

\subsection{Soundness for regular languages in $\EAlam$}

Our goal is now to emulate the above proof to show that the $\EAlam$-terms of
type $\oc\Str\multimap\oc\oc\Bool$ decide regular languages. (The result for
$\Str\multimap\oc\Bool$ then follows by functorial promotion.) While the core of
the semantic evaluation argument is similar, we need to do some syntactic
analysis first before coming to this point.

\subsubsection{Some lemmas and a truncation operation}

Our proof relies on some general properties of $\EAlam$. The two following ones
were established in~\cite{Benedetti}.

\begin{proposition}[Reading property for booleans
  {\cite[Lemma~31(i)]{Benedetti}}]
  \label{prop:reading}
  The only closed inhabitants of the type $\oc\oc\Bool$ are
  $\oc\oc\mathtt{true}$ and $\oc\oc\mathtt{false}$. ($\mathtt{true} = \lambda
  x.\, \lambda y.\, x$ and $\mathtt{false} = \lambda x.\, \lambda y.\, y$)
\end{proposition}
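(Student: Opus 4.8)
The plan is to prove the statement by a syntactic analysis of closed normal inhabitants of $\oc\oc\Bool$, relying on the fact that $\EAlam$ is normalizing and that reduction preserves types, so it suffices to classify the closed normal forms of this type. First I would recall that $\Bool = \forall\alpha.\,\alpha\multimap\alpha\multimap\alpha$, so $\oc\oc\Bool = \oc\oc(\forall\alpha.\,\alpha\multimap\alpha\multimap\alpha)$, and that a closed normal term of an exponential type $\oc\sigma$ must be of the form $\oc t$ with $t$ closed normal of type $\sigma$ — this is the standard head-analysis: a closed normal form cannot be a variable, an application $t\,u$ in normal form with a closed head would require the head to be an abstraction (a redex, contradiction) or itself an application chain bottoming out in a variable (impossible since closed), and a $\lambda$- or $\lambda\oc$-abstraction cannot have an exponential type. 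Applying this twice, a closed normal inhabitant of $\oc\oc\Bool$ is $\oc\oc t$ with $t$ a closed normal inhabitant of $\Bool$.

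Then I would classify the closed normal inhabitants of $\Bool$. By the same head-analysis, such a term cannot be a variable or a genuine application, and it cannot be a $\lambda\oc$-abstraction since $\Bool$'s underlying strictly linear type is $\alpha\multimap\alpha\multimap\alpha$, not an exponential arrow; so it is $\lambda x.\,t_1$ with $x:\alpha$ and $t_1:\alpha\multimap\alpha$ closed-up-to-$x$ and normal, hence $t_1 = \lambda y.\,t_2$ with $y:\alpha$ and $t_2:\alpha$ normal in the context $x:\alpha,\,y:\alpha$ (both linear). A normal term of base type $\alpha$ in this context must be a head-normal form whose head is a variable of type $\alpha$, so it is either $x$ or $y$ with no arguments (any argument would force $\alpha$ to be an arrow type). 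Here the stratification-and-linearity property (Proposition~\ref{prop:stratification}) is what guarantees we need not worry about more exotic shapes: each $\lambda$-bound variable occurs at most once and at depth $0$, so $t_2 \in \{x,y\}$ and the term is $\lambda x.\,\lambda y.\,x = \mathtt{true}$ or $\lambda x.\,\lambda y.\,y = \mathtt{false}$.

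I expect the only delicate point to be justifying rigorously that the head of a closed normal form of arrow or base type is an abstraction resp.\ a bound variable — i.e.\ the absence of "stuck" applications — and that the quantifier rules (which do not change the term, only its type) do not introduce new normal-form shapes; both are routine given that $\forall$-introduction/elimination are term-invariant and that one may always work with a most general type derivation. None of this requires type fixpoints, so the argument is entirely internal to $\EAlam$; the cited Lemma~31(i) of~\cite{Benedetti} packages exactly this reasoning, and I would simply invoke it, or reproduce the two-line induction above if a self-contained argument is wanted.
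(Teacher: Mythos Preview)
The paper does not prove this proposition at all: it is stated with a citation to \cite[Lemma~31(i)]{Benedetti} and used as a black box. Your proposal is therefore not being compared against any argument in the paper itself.

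That said, your reconstruction is correct and is the expected argument: reduce to closed normal forms by normalization and subject reduction, peel off the two $\oc$'s using the fact that a closed normal term of exponential type must be an $\oc$-box (this is essentially the $\oc$-inversion of Proposition~\ref{prop:bang-inv}), and then classify closed normal inhabitants of $\Bool$ by head analysis. You correctly identify the only point needing care, namely that the term-silent quantifier rules do not create new normal-form shapes (here the restriction that $\forall$-elimination instantiates only at linear types is what prevents a $\lambda\oc$-abstraction from acquiring type $\alpha\multimap\alpha\multimap\alpha$). You also already note that one may simply invoke the cited lemma, which is exactly what the paper does.
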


\begin{proposition}[$\oc$-inversion {\cite[Lemma~29(i)]{Benedetti}}]
  \label{prop:bang-inv}
  If $\varnothing \mid \Delta \mid \varnothing \vdash t : \oc\sigma$, then $t =
  \oc{t'}$ for some term $t'$.
\end{proposition}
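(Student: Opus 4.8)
The plan is to argue by induction on the term $t$, inspecting the last rule of a derivation of $\varnothing \mid \Delta \mid \varnothing \vdash t : \oc\sigma$. Before starting I would note that the statement is literally correct only when $t$ is in $\beta\oc$-normal form: without that restriction, the term $(\lambda\oc z.\,\oc z)\,(\oc\mathtt{true})$ already has type $\oc\Bool$ in the empty context while being an application rather than a `$\oc$'-term. So I read the lemma with $t$ normal (which is the case in all its uses below); the general statement ``$t$ reduces to some $\oc t'$'' then follows from the normal-form case together with subject reduction and weak normalization for $\EAlam$, both from~\cite{Benedetti}.

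The first step is to observe that a quantifier rule always concludes a type of the form $\forall\alpha.\,S$ or $S\{\alpha := A\}$, i.e.\ one with head connective $\forall$ or $\multimap$, hence never a type of the shape $\oc\sigma$; so the last rule of the derivation must be a variable rule, an abstraction rule, the application rule, or the functorial promotion rule. If it is functorial promotion then $t = \oc t'$ by construction, which is the goal, so the task is to exclude the other three. A variable rule would force $t$ to be a variable lying in the linear zone $\Gamma$ or the temporary zone $\Theta$ of the context, both of which are empty here; I would also note in passing that no rule types a bare variable occurring only in $\Delta$, since a `$\oc$'-typed hypothesis is consumed only by descending into a box, where it reappears in the temporary zone with one `$\oc$' stripped. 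An abstraction rule gives $t$ a type of the form $A \multimap \tau$ or $\oc\rho \multimap \tau$, not of the shape $\oc\sigma$. Both are thus excluded.

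The case $t = t_1\,t_2$ is where normality is used, and is the step I expect to be the main obstacle. I would unfold $t = h\,u_1\,\cdots\,u_n$ (with $n \geq 1$), where the head $h$ is not an application, and show that $h$ can be nothing: if $h = \lambda x.\,s$ then $h\,u_1$ is a $\longrightarrow_\beta$-redex; if $h = \lambda\oc x.\,s$ then $u_1$ carries a type with head `$\oc$' (the leading $\forall$'s on the type of $h$ must be instantiated before the application, and this leaves the domain a `$\oc$'-type), so $u_1 = \oc u_1'$ by the induction hypothesis applied to $u_1$ — which is again typed in a context $\varnothing \mid \Delta \mid \varnothing$ — and then $h\,u_1$ is a $\longrightarrow_\oc$-redex, contrary to the normality of $t$; and $h$ cannot have the shape $\oc s$ either, because such a term is produced only by functorial promotion, which assigns it a `$\oc$'-type, and no further rule can change that type (the quantifier rules require a strictly linear, resp.\ a universally quantified, premise type, and $\oc\sigma$ is neither), so a term of shape $\oc s$ never acquires an arrow type and cannot stand in function position. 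Hence $h$ is a variable, which again must lie in $\Gamma$ or $\Theta$: impossible. The last rule is therefore forced to be functorial promotion, and $t = \oc t'$.

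Beyond the routine rule-by-rule inspection, the delicate points will be: noticing that the normal-form hypothesis is genuinely needed; the spine analysis of a normal application; and the small sub-lemma that a term of shape $\oc s$ never carries an arrow type, which one reads off the type grammar ($\oc\sigma$ is neither a linear nor a strictly linear type) after checking that the quantifier rules cannot act on a `$\oc$'-type. For $\muEAlam$ one would additionally observe that the $\mu$-fold and $\mu$-unfold rules conclude a type with head $\mu$, resp.\ a substitution instance of a strictly linear type, hence never a `$\oc$'-type, so the case analysis is unchanged.
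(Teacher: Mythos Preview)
The paper does not actually give its own proof of this proposition: it is stated as a result recalled from~\cite{Benedetti} (specifically Lemma~29(i) there), with no argument supplied. So there is nothing to compare your approach against in this paper.

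Your proof sketch is correct, and your preliminary remark is a genuine catch: the statement as written in the paper is false without a normality hypothesis, as your counterexample $(\lambda\oc z.\,\oc z)\,(\oc\mathtt{true}) : \oc\Bool$ shows, and indeed every use of the proposition in Section~\ref{sec:regular} (Lemmas~\ref{lem:shape} and~\ref{lem:unary}) applies it to a subterm of a term already assumed normal. The case analysis you outline is the standard one; the only point worth being slightly more explicit about is the invariant that when you peel off applications from $h\,u_1\cdots u_n$, each $u_i$ is still typed in a context of shape $\varnothing \mid \Delta \mid \varnothing$ --- this follows because the application rule splits only the linear zone $\Gamma$ (which is empty here) while $\Delta$ and $\Theta$ are shared, and the interspersed quantifier rules leave the context unchanged. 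You do use this when invoking the induction hypothesis on $u_1$, so it deserves one sentence. The sub-observation that a term of shape $\oc s$ can only carry a type with head connective `$\oc$' (hence cannot be in function position) is correct and is exactly the kind of reading-off-the-grammar argument the paper itself uses elsewhere.
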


We will also make use of a \emph{truncation} operation on $\EAlam$-terms. (To
our knowledge, it has not appeared previously in the literature.) Its purpose is
to erase all exponentials. This will be how the \emph{stratification} property
of $\EAlam$ (cf.\ Proposition~\ref{prop:stratification}) comes into play.

\begin{definition}
  The \emph{truncation at depth 0} $\trunc{-}$ is defined inductively on terms
  as:
  \[ \trunc{\oc{t}} = (\lambda x.\, x) \quad \trunc{(\lambda\oc x.\,t)} =
    \lambda x.\, \trunc{t} \quad \trunc{\lambda x.\, t} = \lambda x.\, \trunc{t}
    \quad \trunc{t\,u} = \trunc{t}\,\trunc{u} \quad\trunc{x} = x
  \]
  and on types as (using the abbreviation\footnote{This is justified as $\forall
    \alpha.\, \alpha \multimap \alpha$ is the unit to the tensor product used in
    Section~\ref{sec:proof-mueal}.} $1 = \forall \alpha.\, \alpha \multimap
  \alpha$):
  \[ \trunc{\oc\sigma} = 1 \quad \trunc{\sigma \multimap \tau} =
    \trunc{\sigma} \multimap \trunc{\tau} \quad \trunc{\alpha} = \alpha
    \quad \trunc{\forall \alpha.\, \sigma} = \forall \alpha.\,
    \trunc{\sigma} \]
\end{definition}

\begin{proposition}
  If a typing judgment $\Gamma \mid \Delta \mid \varnothing \vdash t : \sigma$
  is derivable in $\EAlam$, then, writing $\trunc{\Gamma}$ for $x_1 :
  \trunc{\tau_1}, \ldots, x_n : \trunc{\tau_n}$ if $\Gamma = x_1 : \tau_1,
  \ldots, x_n : \tau_n$, the judgment $\trunc{\Gamma} \mid \varnothing \mid
  \varnothing \vdash \trunc{t} : \trunc{\sigma}$ is derivable. In particular, if
  $t : \sigma$ is a closed term, then $\trunc{t} : \trunc{\sigma}$.
\end{proposition}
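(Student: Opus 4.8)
The plan is to proceed by induction on the derivation of $\Gamma \mid \Delta \mid \varnothing \vdash t : \sigma$, showing at each step that the truncated judgment $\trunc{\Gamma} \mid \varnothing \mid \varnothing \vdash \trunc{t} : \trunc{\sigma}$ is derivable. The key conceptual point is that truncation collapses the three-zone context into a single linear zone: everything that was in $\Delta$ (types of the form $\oc\sigma$) gets mapped to the trivial type $1 = \trunc{\oc\sigma}$, and since the $\Theta$ zone is empty in the hypothesis it stays empty. So I would first fix the invariant: a truncated $\EAlam$-derivation lives entirely in the purely linear fragment (no $\oc$, only $\multimap$ and $\forall$), i.e.\ it becomes an ordinary second-order affine/linear derivation, which is exactly where the finiteness of Lemma~\ref{lem:finsem} will later apply.

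Next I would walk through the rules. The variable rule from the $\Gamma$ zone and the linear abstraction and application rules truncate homomorphically and translate directly to themselves, using that $\trunc{\sigma \multimap \tau} = \trunc{\sigma} \multimap \trunc{\tau}$; the quantifier rules likewise commute with $\trunc{-}$ since $\trunc{\forall\alpha.\,\sigma} = \forall\alpha.\,\trunc{\sigma}$ and truncation does not touch type variables, so the eigenvariable side condition is preserved. The two cases that need a moment's thought are: (i) the $\oc$-abstraction rule $\lambda\oc x.\,t$, where the premise has $x : \oc\sigma$ in $\Delta$; after truncation $x : \trunc{\oc\sigma} = 1$ moves into $\trunc{\Gamma}$ (note $1$ is a linear type, so this is a legal $\Gamma$-entry), and $\trunc{\lambda\oc x.\,t} = \lambda x.\,\trunc{t}$ is typed by the ordinary linear abstraction rule against $1 \multimap \trunc{\tau} = \trunc{\oc\sigma \multimap \tau}$; (ii) the functorial promotion rule, whose conclusion $\oc t$ truncates to $\trunc{\oc t} = \lambda x.\,x$ of type $1 = \trunc{\oc\sigma}$ --- this is just the identity at $1 = \forall\alpha.\,\alpha\multimap\alpha$, derivable with an empty context, so it matches whatever $\trunc{\Gamma}$ we need (and we need not even look at the premise). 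One should also handle the variable rule for the $\Theta$ zone vacuously, since $\Theta = \varnothing$ throughout by the hypothesis --- this is exactly why the statement requires the temporary zone to be empty: if it were not, truncating a $\Theta$-variable of type $\oc\sigma$ to type $1$ while it is used non-linearly would break affinity.

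The main obstacle, and the only real subtlety, is checking that the affine linearity discipline survives the collapse in the promotion and $\oc$-abstraction cases: when several $\oc$-bound variables all get identified in type to $1$, we must confirm that the resulting term still uses each of them affinely in the truncated derivation. For $\oc$-abstraction this follows from Proposition~\ref{prop:stratification} --- the bound variable $x$ occurs in $t$ only at depth $1$, so after truncation (which replaces the surrounding $\oc(-)$ by $\lambda y.\,y$, discarding its body) the variable $x$ disappears entirely, meaning $\lambda x.\,\trunc{t}$ is in fact a vacuous abstraction and trivially affine. More generally, any variable bound by $\lambda\oc$ contributes nothing to the truncated term, which is why $\trunc{t}$ is an honest affine $\lambda$-term; I would make this precise as a preliminary observation (free variables of $\trunc{t}$ are exactly the free variables of $t$ occurring at depth $0$, which Proposition~\ref{prop:stratification} tells us are used at most linearly) and then the induction goes through mechanically. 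The last sentence, specializing to closed $t$, is immediate by taking $\Gamma = \Delta = \varnothing$.
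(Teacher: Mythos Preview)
Your approach---induction on the typing derivation---is the same as the paper's, and your treatment of the exponential-related cases (functorial promotion and $\lambda\oc$-abstraction) is correct and in fact more explicit than the paper's, which dismisses them as ``mostly straightforward''. Your observation that variables bound by $\lambda\oc$ vanish from $\trunc{t}$ by stratification is exactly right and neatly explains why affinity survives.

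However, you gloss over precisely the case that the paper flags as the one real subtlety: quantifier \emph{elimination}. You write that ``the quantifier rules likewise commute with $\trunc{-}$'', but for the elimination rule the induction hypothesis gives $\trunc{\Gamma}\mid\varnothing\mid\varnothing\vdash\trunc{t}:\forall\alpha.\,\trunc{S}$, and what you must derive is $\trunc{\Gamma}\mid\varnothing\mid\varnothing\vdash\trunc{t}:\trunc{S\{\alpha:=A\}}$; applying elimination to the hypothesis yields instead the type $\trunc{S}\{\alpha:=\trunc{A}\}$. So one must check that $\trunc{S\{\alpha:=A\}}=\trunc{S}\{\alpha:=\trunc{A}\}$. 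This is not entirely automatic: an occurrence of $\alpha$ at depth~$\geq 1$ in $S$ is erased in $\trunc{S}$, and one must observe that the corresponding copy of $A$ in $S\{\alpha:=A\}$ is likewise erased when truncating (since it sits under a $\oc$). The paper's subsequent Remark makes clear that this step is delicate---it is exactly what fails if one tries to generalize to a ``truncation at depth $k$'' for $k\geq 1$ (consider $\forall\alpha.\,\oc\alpha\multimap\alpha$ instantiated at $\alpha:=\oc\tau$). So while your proof plan is not wrong, the case you dismissed in half a sentence is the one that carries the content, and the cases you dwell on are the routine ones.
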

\begin{proof}
  By a mostly straightforward induction on the type derivation. Even so, let us
  treat a case involving a small subtlety: when the derivation ends with a
  quantifier elimination. In that case, the induction hypothesis gives us the
  typing judgment $\trunc{\Gamma} \mid \varnothing \mid \varnothing \vdash
  \trunc{t} : \forall \alpha.\, \trunc{S}$, and from this we must derive
  $\trunc{\Gamma} \mid \varnothing \mid \varnothing \vdash t : \trunc{S\{\alpha
    := \sigma\}}$. What the same instantiation rule can give us from our premise
  is $\trunc{\Gamma} \mid \varnothing \mid \varnothing \vdash t :
  \trunc{S}\{\alpha := \trunc{\sigma}\}$. One is therefore led to hope that
  $\trunc{S}\{\alpha := \trunc{\sigma}\} = \trunc{S\{\alpha := \sigma\}}$.
  Indeed, this can be checked by distinguishing, for each occurrence of $\alpha$
  in $\sigma$, two possible cases: either it is at depth 0 and remains in
  $\trunc{S}$, or at depth $\geq 1$ and is erased in $\trunc{S}$.
\end{proof}
\begin{remark}
  The above proof is the reason why we do not generalize here our truncation
  operation to a \enquote{truncation at depth $k$} for $k \geq 1$, which would
  erase all exponentials of depth $> k$. Indeed, a typical example for which the
  above reasoning would fail is the truncation at depth 1 of $\forall \alpha.\,
  \oc\alpha \multimap \alpha$ instantiated with $\alpha:=\oc\tau$. So these
  higher depths truncations would need additional conditions to be well-behaved.
\end{remark}

\begin{proposition}
  For all $k \in \mathbb{N}$ and all $\EAlam$-terms $t,t'$, if $t
  \longrightarrow t'$, then $\trunc{t} \longrightarrow^* \trunc{t'}$ (this
  also applies to untyped terms which satisfy the stratification property, i.e.\
  the conclusions of Proposition~\ref{prop:stratification}).
\end{proposition}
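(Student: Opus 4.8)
The statement is a simulation lemma: truncation turns a single reduction step into zero or more steps. I would prove it by induction on the derivation that $t \longrightarrow t'$, i.e.\ on the structure of the evaluation context surrounding the redex, with the base cases being the two reduction rules $\longrightarrow_\beta$ and $\longrightarrow_\oc$ fired at the root. The contextual cases are routine once one checks that $\trunc{-}$ commutes with the term constructors in the obvious way — e.g.\ if $t = t_1\,u \longrightarrow t_1'\,u = t'$ because $t_1 \longrightarrow t_1'$, then $\trunc{t} = \trunc{t_1}\,\trunc{u} \longrightarrow^* \trunc{t_1'}\,\trunc{u} = \trunc{t'}$ by the induction hypothesis; similarly under $\lambda x.\,(-)$ and $\lambda\oc x.\,(-)$. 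The one context that needs a remark is reduction \emph{under a $\oc$}: if $t = \oc{t_1} \longrightarrow \oc{t_1'} = t'$, then both sides truncate to $\lambda x.\,x$, and the required zero-step reduction $\trunc{t} = \lambda x.\,x \longrightarrow^* \lambda x.\,x = \trunc{t'}$ holds trivially (this is exactly why the statement allows $\longrightarrow^*$ rather than $\longrightarrow$).

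The base cases are where the substitution lemma for truncation is needed, and this is the main obstacle. For the $\beta$-case, $(\lambda x.\,u)\,v \longrightarrow u\{x:=v\}$, we have $\trunc{(\lambda x.\,u)\,v} = (\lambda x.\,\trunc{u})\,\trunc{v} \longrightarrow_\beta \trunc{u}\{x:=\trunc{v}\}$, so it suffices to show $\trunc{u}\{x:=\trunc{v}\} = \trunc{u\{x:=v\}}$ — and here we crucially use the stratification property (Proposition~\ref{prop:stratification}): since $\lambda x.\,u$ is a subterm of a stratified term, $x$ occurs at most once in $u$ and at depth $0$, so the substituted copy of $v$ lands outside every $\oc(-)$ in $u$ and $\trunc{-}$ sees it verbatim. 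For the $\oc$-case, $(\lambda\oc x.\,u)\,(\oc v) \longrightarrow u\{x:=v\}$, we compute $\trunc{(\lambda\oc x.\,u)\,(\oc v)} = (\lambda x.\,\trunc{u})\,(\lambda y.\,y) \longrightarrow_\beta \trunc{u}\{x := \lambda y.\,y\}$; now stratification says every occurrence of $x$ in $u$ is at depth $1$, hence inside some $\oc(-)$, hence erased by $\trunc{-}$, so $\trunc{u}$ contains no free $x$ and $\trunc{u}\{x:=\lambda y.\,y\} = \trunc{u} = \trunc{u\{x:=v\}}$ — the last equality again because all copies of $v$ are introduced at depth $\geq 1$ and thus truncated away.

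So the real content is a substitution lemma of the shape: \enquote{if $x$ occurs in $u$ only at depth $0$, then $\trunc{u\{x:=v\}} = \trunc{u}\{x:=\trunc{v}\}$; if $x$ occurs in $u$ only at depth $\geq 1$, then $\trunc{u\{x:=v\}} = \trunc{u}$,} proved by a straightforward induction on $u$ tracking the depth. I would state and prove this auxiliary lemma first, then feed it into the two base cases; everything else is bookkeeping. The parenthetical claim about untyped stratified terms needs nothing extra — the induction never uses typing, only the two bullets of Proposition~\ref{prop:stratification}, which are closed under the relevant subterms.
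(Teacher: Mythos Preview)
Your proposal is correct and follows essentially the same approach as the paper's proof: both rely on the stratification property to justify the two substitution facts (depth-$0$ variables commute with truncation, depth-$1$ variables disappear under truncation), and both handle the depth-$\geq 1$ redex case by noting that truncation collapses everything under a $\oc$. The only difference is organizational: the paper first separates off the case where the redex is at depth $\geq 1$ and then does induction on the (depth-$0$) context, whereas you fold that case into the context induction as the $\oc(-)$ congruence case; the content is identical.
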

\begin{proof}
  If the redex contracted in $t$ to obtain $t'$ is at depth $\geq 1$, then one can
  prove that $\trunc{t} = \trunc{t'}$.

  Otherwise, by induction on the context of the redex, one can restrict to the
  case where $t = u\,v$ and the application of $u$ to $v$ is the contracted
  redex. We proceed by case analysis:
  \begin{itemize}
  \item If $u = \lambda x.\, u'$, then $t' = u'\{x:=v\}$. We use the fact that $x$
    appears only at depth zero in $u'$ (Proposition~\ref{prop:stratification}) to
    show that $\trunc{u'\{x:=v\}} = \trunc{u'}\{x:=\trunc{v}\}$. The
    latter is a reduct of $\trunc{u}\,\trunc{v} = \trunc{t}$.
  \item If $u = \lambda\oc x.\, u'$, then $v = \oc{v'}$ and $t' = u'\{x:=v'\}$.
    Moreover, $x$ appears only at depth 1 in $u'$ (again by
    Proposition~\ref{prop:stratification}). Therefore, $\trunc{u'}$ does not
    contain $x$ as a free variable; thus, $\trunc{t'} = \trunc{u'}$ is a reduct
    of $\trunc{t} = (\lambda x.\, \trunc{u'})\,\trunc{v}$.\qedhere
  \end{itemize}
\end{proof}

A final general observation (unrelated to truncation) before delving into the
soundness proof itself:
\begin{proposition}
  \label{prop:linearize-var}
  Let $t$ be a term a free variable $x$. Suppose that $t =
  t'\{x_1:=x\}\ldots\{x_n:=x\}$, where each $x_i$ appears only once in $t'$ (so
  $n$ is the number of occurrences of $x$ in $t$), and $\Gamma \mid \Delta \mid
  \Theta, x: A \vdash t : \tau$, where $A$ is linear (i.e. not of the form
  $\oc\sigma$). Then $\Gamma, x_1 : A, \ldots, x_n : A \mid \Delta \mid \Theta
  \vdash t : \tau$. We write $t' = t\{x:=x_1,\ldots,x_n\}$ for this situation.
  (Such a $t'$ always exists given $t$.)
\end{proposition}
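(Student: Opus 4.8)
The plan is to prove this by induction on the derivation of $\Gamma \mid \Delta \mid \Theta, x : A \vdash t : \tau$, generalizing the statement so that the inductive hypothesis applies: rather than linearizing all occurrences of $x$ at once, I would show that if $x$ is a variable occurring (at least once) in $t$ with $x : A$ in the temporary context $\Theta$, and we split the occurrences of $x$ into fresh variables $x_1, \ldots, x_n$ to form $t'$, then the resulting judgment holds with $x_1, \ldots, x_n : A$ placed in the \emph{linear} context $\Gamma$. The key structural facts making this work are the stratification and linearity property (Proposition~\ref{prop:stratification}): since $A$ is linear (not of the form $\oc\sigma$), the occurrences of $x$ must all sit at depth $0$ in $t$, so none of them is captured inside an exponential $\oc(-)$; in particular the functorial promotion rule, whose premise has an empty linear and non-linear context, is never the rule that introduces an occurrence of $x$, and the variables $x_i$ we create never need to migrate into $\Delta$ or be promoted.

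The induction then proceeds by case analysis on the last rule:
\begin{itemize}
\item For the temporary variable rule $\Gamma \mid \Delta \mid \Theta, x : A \vdash x : A$: here $n = 1$, $t' = x_1$, and we conclude by the \emph{linear} variable rule $\Gamma, x_1 : A \mid \Delta \mid \Theta \vdash x_1 : A$, using that $A$ is linear.
\item For the application rule with conclusion $t\,u$, the occurrences of $x$ are partitioned between $t$ and $u$; apply the induction hypothesis to each side, obtaining the corresponding $x_i$'s in the linear contexts $\Gamma_1$ and $\Gamma_2$ respectively, and reassemble with the application rule using $\Gamma_1 \uplus \Gamma_2$ (disjointness is immediate since the $x_i$ are fresh).
\item For the abstraction rules, the quantifier rules, and the $\mu$-fold/$\mu$-unfold rules of $\muEAlam$ — though here we are in $\EAlam$, so only the first two — the context manipulations on $\Gamma$ are transparent and the induction hypothesis passes through directly; note that $x$, being in the temporary context, is unaffected by $\lambda$-abstraction or $\lambda\oc$-abstraction, which act on $\Gamma$ and $\Delta$.
\item For the functorial promotion rule, the premise $\varnothing \mid \varnothing \mid \Theta'' \vdash u : \sigma$ has empty linear and non-linear contexts; any occurrence of $x$ inside $\oc u$ lies at depth $\geq 1$, contradicting the stratification constraint that forces linear-typed temporary variables to occur only at depth $0$. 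Hence this rule contributes no occurrences of $x$, so there is nothing to linearize and the conclusion follows by weakening the $x_i$-free judgment (of which there are none to add).
\end{itemize}

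The main obstacle — really the only subtle point — is justifying that occurrences of a temporary variable of \emph{linear} type always sit at depth $0$, so that the promotion case is vacuous. One has to read this off carefully: in the functorial promotion rule, the promoted subterm's free variables come from the temporary context $\Theta$, which becomes $\oc\Theta$ in the conclusion's non-linear context; so a temporary variable that occurs under a `$\oc$' must have been typed with a type of the form $\oc\sigma$ in the outer judgment, which is excluded by our hypothesis that $A$ is linear. (Equivalently, this is a direct consequence of the depth bookkeeping underlying Proposition~\ref{prop:stratification}.) Once this is in hand, the remaining cases are routine bookkeeping, and existence of the $t'$ displaying $t$ as $t\{x := x_1,\ldots,x_n\}$ is obtained simply by picking an enumeration of the occurrences of $x$ in $t$ and renaming the $i$-th one to a fresh $x_i$.
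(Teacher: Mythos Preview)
Your proposal is correct and follows essentially the same approach as the paper: induction on the typing derivation, replacing each temporary-variable axiom $\ldots \mid \ldots \mid \ldots, x:A \vdash x:A$ by a linear-variable axiom $\ldots, x_i:A \mid \ldots \mid \ldots \vdash x_i:A$. The paper's proof is a single sentence stating exactly this replacement; you have spelled out the case analysis, and in particular the promotion case (which the paper leaves implicit) with the correct justification that a variable lying in the temporary context of the root with a linear type cannot occur under a $\oc$, since it would then surface in the non-linear context $\Delta$ with an exponential type.
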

\begin{proof}
  By induction on typing derivations, replacing each rule of the form $\ldots
  \mid \ldots \mid \ldots, x : A \vdash x : \sigma$ by a rule of the form
  $\ldots, x_i : A, \ldots \mid \ldots \mid \ldots \vdash x_i : A$ for some $i
  \in \{1, \ldots, n\}$.
\end{proof}

\subsubsection{Syntactic analysis}

We can now start looking at the languages decided by $\EAlam$-terms.

\begin{lemma}
  \label{lem:shape}
  For any $\EAlam$-term $t : \oc\Str\multimap\oc\oc\Bool$, there exists $u :
  \Str[\sigma_1] \multimap \ldots \multimap \Str[\sigma_n] \multimap \oc\Bool$
  (for some $n \in \mathbb{N}$) such that, for all $s : \Str$, $t\,\oc s$ and
  $\oc(u\,s \ldots\, s)$ have the same normal form.
\end{lemma}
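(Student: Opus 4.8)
The plan is to perform a normal-form analysis of the term $t : \oc\Str \multimap \oc\oc\Bool$ and extract from it a term $u$ acting on finitely many copies of Church strings, each possibly instantiated at a different type. First I would put $t$ in normal form (this is legitimate since $\EAlam$ is strongly normalizing, and the normal form still has type $\oc\Str \multimap \oc\oc\Bool$). By the shape of a normal term of arrow type, $t = \lambda\oc w.\, r$ where $w : \Str$ is the bound variable and $r : \oc\oc\Bool$ in a context with $w$ at depth $1$. Now apply $t$ to $\oc s$ for a generic closed $s : \Str$: the term $t\,\oc s$ reduces to $r\{w := s\}$. The key point is to understand how the variable $w$ can be \emph{used} inside $r$: by the stratification property (Proposition~\ref{prop:stratification}), $w$ occurs at depth exactly $1$ in $r$, and at depth $1$ it behaves like an ordinary (linear) variable. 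So each occurrence of $w$ sits inside some exponential box $\oc(\ldots)$, and within that box it is applied — since its type is the universally quantified $\Str = \forall\alpha.\, \Str[\alpha]$ — to arguments determined by the surrounding term, forcing an instantiation $\Str \rightsquigarrow \Str[\sigma_i]$ by the quantifier elimination rule for the $i$-th occurrence.

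The heart of the argument is then to collect these $n$ occurrences of $w$ and linearize them. Using Proposition~\ref{prop:linearize-var} (the linearization lemma), I would replace the single variable $w$ of type $\Str$ by $n$ fresh variables $w_1, \ldots, w_n$, the $i$-th one getting the instantiated type $\Str[\sigma_i]$ — here I should be a little careful, since $\Str$ is not literally linear, but after one quantifier elimination step each $\Str[\sigma_i]$ \emph{is} linear (its head connective is $\multimap$), so the linearization applies to the subterms where the instantiations have already been performed. The resulting term, abstracted over $w_1, \ldots, w_n$, is a term $u$ of type $\Str[\sigma_1] \multimap \cdots \multimap \Str[\sigma_n] \multimap \oc\Bool$: indeed $r$, which has type $\oc\oc\Bool$, must by $\oc$-inversion (Proposition~\ref{prop:bang-inv}) be of the form $\oc r'$ with $r' : \oc\Bool$, and $r'$ is built from the boxed occurrences of the $w_i$'s. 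Thus $r = \oc(u\, w_1 \cdots w_n)$ in the extended typing context, and substituting $s$ for every $w_i$ gives $t\,\oc s \longrightarrow^* \oc(u\, s \cdots s)$, so the two have the same normal form.

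The main obstacle I anticipate is the bookkeeping around the polymorphic type $\Str$ and the non-linearity hidden in the $\forall$: one has to argue that, although $w$ has the type $\Str$ with a leading quantifier (so that, strictly speaking, $w$ is a \enquote{strictly linear} variable and the occurrences are all forced to depth $1$ by stratification), each use of $w$ must begin with a quantifier elimination instantiating $\alpha$ to some linear type $\sigma_i$, after which the linearization lemma becomes applicable. A secondary subtlety is that the $\sigma_i$ may genuinely differ between occurrences — this is exactly why the statement quantifies over $\Str[\sigma_1] \multimap \cdots \multimap \Str[\sigma_n]$ with possibly distinct $\sigma_i$, rather than a single type — and one must take the conjunction (here, the multi-argument curried type) of all the instances. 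Everything else (the reduction $t\,\oc s \longrightarrow^* r\{w:=s\}$, the $\oc$-inversion step, and the depth-preservation of reduction) is routine given the propositions already established.
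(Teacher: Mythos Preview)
Your proposal follows essentially the same approach as the paper: normalize $t$, argue that $t = \lambda\oc x.\, \oc t''$ via the reading property and $\oc$-inversion, then apply Proposition~\ref{prop:linearize-var} to split the occurrences of $x$ in $t''$ and abstract over the resulting fresh variables to obtain $u$.

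One correction: your worry that ``$\Str$ is not literally linear'' is mistaken. In the paper's terminology (see the grammar of types in Section~\ref{sec:definitions}), a \emph{linear} type is any type whose head connective is not $\oc$; in particular $\Str = \forall\alpha.\,\Str[\alpha]$ is strictly linear, hence linear, and Proposition~\ref{prop:linearize-var} applies directly with $A = \Str$. The paper does exactly this, obtaining $u : \Str \multimap \cdots \multimap \Str \multimap \oc\Bool$ and simply asserting that this $u$ ``enjoys the property claimed in the lemma statement''; it never explicitly extracts the instantiated $\sigma_i$. Your attempt to track the $\sigma_i$'s is therefore more careful than the paper on this point, though your justification that each occurrence of $w$ \emph{must} begin with a quantifier elimination is not strictly forced by the typing rules (an occurrence could in principle be passed as an argument to something expecting a $\Str$). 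This residual subtlety is harmless for the overall argument, since the instantiation is in any case read off in the proof of the next lemma.
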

\begin{proof}
  First, one may take $t$ to be in normal form. In that case, the only possible
  redex in $t\,\oc{s}$ is the application at the root. Moreover, $t\,\oc{s}$
  must be reducible since it is neither $\oc\oc\mathtt{true}$ nor
  $\oc\oc\mathtt{false}$, cf.\ Proposition~\ref{prop:reading}. Therefore, $t =
  (\lambda\oc x.\, t')$ (the case $t = (\lambda x.\, t')$ can be excluded
  because then $t$ would be of type $A \multimap \tau$ where $A$ is linear, in
  particular $A \neq \oc\Str$).

  The next step is to prove that $t = \lambda\oc x.\, \oc{t''}$ for some
  $\EAlam$-term $t''$. According to the typing rules, the judgment $\varnothing
  \mid \varnothing \mid \varnothing \vdash t : \oc\Str \multimap \oc\oc\Bool$
  can only be proven by first establishing $\varnothing \mid x : \oc\Str \mid
  \varnothing \vdash t' : \oc\oc\Bool$. According to the $\oc$-inversion
  property (Proposition~\ref{prop:bang-inv}), since the first and third part of
  the typing context are empty and the head connective of the type is `$\oc$',
  $t'$ must be of the form $\oc{t''}$.

  Finally, since $\varnothing \mid \varnothing \mid x : \Str \vdash t' :
  \oc\oc\Bool$ must hold (it is the only premise which can lead to the above
  judgement on $t'$), we can apply Proposition~\ref{prop:linearize-var} to $t''$
  (indeed, the type $\Str$ is linear). Then, the term $u = \lambda x_1.\,
  \ldots\, \lambda x_m.\, t''\{x:=x_1,\ldots,x_m\}$ (where $x$ occurs $m$ times
  in $t''$) enjoys the property claimed in the lemma statement.
\end{proof}

Let us focus on the case $n=1$ for a moment, and do the same kind of analysis
again.

\begin{lemma}
  \label{lem:unary}
  Let $u : \Str[\sigma] \multimap \oc\Bool$ be an $\EAlam$-term, and let $\tau =
  \sigma \multimap \sigma$.

  There exist $\EAlam$-terms $f_0 : \tau$, $f_1 : \tau$ and $g : \tau \multimap
  \ldots \multimap \tau \multimap \oc\Bool$ (with $m$ times $\tau$, for some $m
  \in \mathbb{N}$) such that for all $s : \Str$, if $s\,\oc{f_0}\,\oc{f_1}
  \longrightarrow^* \oc{h}$, then $u\,s$ and $\oc(g\,h\,\ldots\,h)$ have the
  same normal form.
\end{lemma}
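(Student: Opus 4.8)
The plan is to run again the kind of syntactic dissection used for Lemma~\ref{lem:shape}, but one exponential level deeper, exploiting that $\Str[\sigma] = \oc\tau\multimap\oc\tau\multimap\oc\tau$. First I would normalise $u$ (harmless: $u\,s$ and $u_0\,s$ have the same normal form whenever $u\longrightarrow^* u_0$). A closed normal term of type $\Str[\sigma]\multimap\oc\Bool$ must be $\lambda x.\,u'$ with $x : \Str[\sigma]$ a \emph{linear} variable -- the $\lambda\oc$-abstraction case is excluded since $\Str[\sigma]$ is not of the form $\oc\rho$ -- and by Proposition~\ref{prop:stratification} the variable $x$ occurs at most once in $u'$, necessarily at depth $0$. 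If $x$ does not occur in $u'$, then $u'$ is closed, so $\oc$-inversion (Proposition~\ref{prop:bang-inv}) gives $u' = \oc u''$ and $u$ is constant: one takes $f_0 = f_1 = \lambda y.\,y$, $m = 0$, and reads $g$ off from $u''$.

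The substantial case is when $x$ occurs (exactly once) in $u'$. The key syntactic claim is that this occurrence lies inside a subterm of the form $x\,(\oc f_0)\,(\oc f_1)$, with $f_0, f_1$ \emph{closed normal} terms of type $\tau$. Indeed, in a normal form $x$ cannot be in argument position (the head of the enclosing application spine would be a free variable, but $x$ is the only one and occurs once) nor at the bottom of a stack of abstractions (it would eventually be applied, creating a $\beta$-redex, or occupy an ill-typed position), so $x$ heads an application spine $x\,a_1\cdots a_j$; since $x : \oc\tau\multimap\oc\tau\multimap\oc\tau$ reaches a non-functional codomain only after two arguments, while a partial spine $x\,a_1$ would again be forced into a forbidden position, we get $j = 2$; and $a_1, a_2$, being closed (no free variable is left) and normal of type $\oc\tau$, are of the form $\oc f_0$ and $\oc f_1$ by $\oc$-inversion. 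Write $u' = D[\,x\,(\oc f_0)\,(\oc f_1)\,]$ with $D$ a one-hole context, its hole at depth $0$, and $x$ absent from $D$, $f_0$, $f_1$.

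Then I would analyse the effect of filling the hole of $D$. Take a fresh variable $z$, type $D[\,\oc z\,] : \oc\Bool$ with $z : \oc\tau$ in the non-linear part of the context, and reduce it to a normal form $v$. Since the linear and temporary parts of the context stay empty, Proposition~\ref{prop:bang-inv} gives $v = \oc v'$ with $v' : \Bool$; and because promotion moves $z$ out of $\Delta$ into the temporary context, $z$ occurs at depth $0$ in $v'$ in the temporary context, say with $m$ occurrences -- so Proposition~\ref{prop:linearize-var} applies and yields a closed term $g := \lambda z_1.\,\ldots\,\lambda z_m.\,v'\{z := z_1,\ldots,z_m\}$ with $m$ arguments of type $\tau$ and output type $\Bool$ (the `$\oc$' being then supplied by the explicit $\oc(-)$ around $g\,h\,\ldots\,h$, so that its type matches that of $u\,s$). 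Finally, for any $s : \Str$ with $s\,\oc f_0\,\oc f_1 \longrightarrow^* \oc h$ -- so $h$ is a closed term of type $\tau$ -- we have $u\,s \longrightarrow_\beta D[\,s\,(\oc f_0)\,(\oc f_1)\,] \longrightarrow^* D[\,\oc h\,]$; since reduction commutes with the substitution of the closed term $h$ for $z$, from $D[\,\oc z\,] \longrightarrow^* \oc v'$ we get $D[\,\oc h\,] \longrightarrow^* \oc(v'\{z:=h\})$, whereas $g\,h\,\ldots\,h \longrightarrow^* v'\{z:=h\}$; hence $u\,s$ and $\oc(g\,h\,\ldots\,h)$ both reduce to $\oc(v'\{z:=h\})$, and in particular share a normal form.

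The step I expect to be the real obstacle is the syntactic claim of the second paragraph. In the simply typed setting one would just $\eta$-expand and inspect the head; but $\EAlam$-normal forms contain \enquote{stuck} $\oc$-redexes $(\lambda\oc w.\,t)\,r$ with $r$ neutral, so one must take care that $x$ is genuinely applied to two boxed arguments rather than being trapped -- e.g.\ in argument position of such a stuck redex. Making this airtight requires a clean inductive description of $\EAlam$-normal forms, organised by depth and by head shape. Everything else -- the commutation of reduction with substitution of a closed term, and the context bookkeeping around $\oc$-inversion and Proposition~\ref{prop:linearize-var} -- is routine.
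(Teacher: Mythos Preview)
Your proposal is correct and follows essentially the same strategy as the paper: locate the subterm $x\,\oc f_0\,\oc f_1$ inside the normal form of $u$, then linearize the surrounding context via Proposition~\ref{prop:linearize-var}. The packaging differs slightly: the paper makes the shape of your context $D$ explicit---after a redex permutation it becomes a nested chain of $\mathtt{let}\;\oc y_i \leftarrow \ldots$ with $x\,\oc f_0\,\oc f_1$ at the bottom---and reads $g$ off from the bodies $p_i$; you instead substitute $\oc z$ into the abstract context $D$, normalize, and apply $\oc$-inversion. Both routes are sound.

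The one place where your write-up is genuinely incomplete is exactly where you flag it: the syntactic claim that $x$ sits at the head of a two-argument spine $x\,(\oc f_0)\,(\oc f_1)$ with $f_0,f_1$ \emph{closed}. Your argument that ``the head of the enclosing application spine would be a free variable'' is not right as stated---the head can perfectly well be a $\lambda\oc$-abstraction in a stuck configuration. What makes the claim true is precisely the paper's inductive case analysis: for any $v : \oc\theta$ normal with $x$ free at depth $0$, one has $v = p\,q_1\cdots q_k$ with either $p = x$ (forcing $k=2$, $\theta=\tau$, and $q_i = \oc f_i$ closed by $\oc$-inversion) or $p = \lambda\oc y.\,p'$, in which case $x$ must lie in $q_1$ (else $q_1$ closed of type $\oc\rho$, hence a $\oc$-term, hence a redex), and one recurses on $q_1 : \oc\theta'$. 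This also yields for free that no bound variable is in scope at the occurrence of $x$ (the path descends only into first-argument positions, never into $\lambda$- or $\lambda\oc$-bodies), which is what justifies your assertion that $f_0,f_1$ are closed. With that analysis in hand, the rest of your argument goes through as written.
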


\begin{proof}
  We assume that $u$ is in normal form. Since the head connective of
  $\Str[\sigma]$ is not `$\oc$', $u = \lambda x.\, v$ and $x : \Str[\sigma] \mid
  \varnothing \mid \varnothing \vdash v : \oc\Bool$. We may assume that $v$
  contains $x$ as a free variable; otherwise, $u$ is a constant function and the
  conclusion we want holds trivially (take $m = 0$).

  Let us examine in general the shape of $v : \oc\theta$ in normal form (where
  $\theta$ is not necessarily $\Bool$) such that $x$ appears free in $v$ and $x :
  \Str[\sigma] \mid \varnothing \mid \varnothing \vdash v : \oc\theta$.
  By~\cite[Lemma~29(ii)]{Benedetti}, $v$ must be an application: $v =
  p\,q_1\,\ldots\,q_k$ where $p$ is not an application and $k \geq 1$. Observe
  that $p$ cannot be of the form $\lambda y.\, p'$, since $p\,q_1$ would then be
  a redex. There are two possible cases:
  \begin{itemize}
  \item $p = x$, and then $\theta = \sigma \multimap \sigma = \tau$, $k=2$ and
    the closed $\EAlam$-terms $q_1,q_2 : \oc\tau$ must be of the form $q_i =
    \oc{q'_i}$ by $\oc$-inversion (Proposition~\ref{prop:bang-inv})
  \item $p = (\lambda\oc y.\, p')$, in which case $x$ must appear free in
    $q_1$. Indeed, suppose for the sake of contradiction that $q_1$ is closed;
    then $\varnothing \mid \varnothing \mid \varnothing \vdash q_1 : \theta_1 =
    \oc\rho$ for some $\rho$, therefore $\oc$-inversion gives us $q_1 = \oc{r}$
    for some $r$, so $p\,q_1$ would be a redex.
  \end{itemize}
  In the second case, we may furthermore take $k = 1$ w.l.o.g.: if $k \geq 2$,
  then for all $s : \Str[\sigma]$, the term $((\lambda\oc y.\,
  p'\,q_2\,\ldots\,q_k)\,q_1)\{x:=s\}$ has the same normal form as $v\{x:=s\}$
  (this is analogous to Regnier's \mbox{$\sigma$-equivalence} by redex
  permutations~\cite{Regnier}). And since $\varnothing \mid y : \oc\rho \mid
  \varnothing \vdash p'\,q_2\,\ldots\,q_k : \oc\theta$, the normal form of
  $p'\,q_2\,\ldots\,q_k$ is of the form $\oc{p''}$ (this is again an application
  of $\oc$-inversion).

  To recapitulate, either $v = x\,\oc{f_0}\,\oc{f_1}$ or $v = (\lambda\oc y.\,
  \oc{p})\,v' = \mathtt{let}\; \oc y \leftarrow v'\;\mathtt{in}\; \oc{p}$ where
  $x$ appears free in $v'$, but not in $p$. In the latter case, we have $x :
  \Str[\sigma] \mid \varnothing \mid \varnothing \vdash v' : \oc\theta'$. So, by
  induction on the size of terms,
  \[ v = \mathtt{let}\; \oc y_1 \leftarrow (\ldots (\mathtt{let}\; \oc y_l
    \leftarrow x\,\oc{f_0}\,\oc{f_1}\;\mathtt{in}\; \oc{p_l})
    \ldots)\;\mathtt{in}\; \oc{p_1} \]
  As a consequence, for all $s : \Str$, if $s\,\oc{f_0}\,\oc{f_1}
  \longrightarrow^* \oc{h}$ (recall that $f_0,f_1 : \tau$ are closed) then
  \[ u\,s = (\lambda x.\, v)\, s \longrightarrow v\{x:=s\} \longrightarrow^*
    \oc(p_1\{y_1:=(\ldots p_l\{y_l:=h\} \ldots)\}) \]

  (Morally, we are still trying to permute redexes; the reader may check that
  there is an analogy between the above operation and Carraro and Guerrieri's
  $V\,((\lambda x.\,L)\,N) \rightsquigarrow (\lambda x.\, V\,L)\,N$
  rule~\cite{CarraroGuerrieri} for the call-by-value $\lambda$-calculus.)

  Let $r = p_1\{y_1:=(\ldots p_l\{y_l:=z\} \ldots)\}$, where $z$ is a fresh
  variable, so that the right-hand side can be written as $\oc(r\{z:=h\})$.
  Since $h : \tau$ is a closed subterm of $\oc(r\{z:=h\}) : \oc\Bool$ (we are
  using subject reduction here), then it must be true that $\varnothing \mid
  \varnothing \mid z : \tau \vdash r : \Bool$. Let us now apply
  Proposition~\ref{prop:linearize-var}, using the fact that $\tau = \sigma
  \multimap \sigma$ is linear: for some $m \in \mathbb{N}$, $z_1 : \tau, \ldots,
  z_m : \tau \mid \varnothing \mid \varnothing \vdash r\{z:=z_1,\ldots,z_m\} :
  \Bool$.

  Finally, we take $g = \lambda z_1.\,\ldots\,\lambda z_m.\,
  r\{z:=z_1,\ldots,z_m\}$. The lemma statement holds with the $f_0$, $f_1$, $m$
  and $g$ that we have constructed.
\end{proof}

The last purely syntactic step is to use the truncation operation to formulate a
variation of the above lemma. The point is to be able to decide the membership
in the language defined by an $\EAlam$-term by computing purely in $\Alamtwo$.
This sets the stage for the use of a semantics of $\Alamtwo$.

\begin{lemma}
  \label{lem:unary-trunc}
  Let $u : \Str[\sigma] \multimap \oc\Bool$ be an $\EAlam$-term, and let $\tau =
  \trunc{\sigma \multimap \sigma}$.

  There exist $\Alamtwo$-terms $f_0 : \tau$, $f_1 : \tau$ and $g : \tau
  \multimap \ldots \multimap \tau \multimap \oc\Bool$ (with $m$ times $\tau$,
  for some $m \in \mathbb{N}$) such that for all $w \in \{0,1\}^*$, if
  $\overline{w}\,\oc{f_0}\,\oc{f_1} \longrightarrow^* \oc{h}$, then
  $u\,\overline{w}$ and $\oc(g\,h\,\ldots\,h)$ have the same normal form.

  (Recall that $\overline{w} : \Str$ is the Church encoding of $w$ in $\EAlam$.)
\end{lemma}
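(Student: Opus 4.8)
The statement of Lemma~\ref{lem:unary-trunc} is obtained from Lemma~\ref{lem:unary} by passing everything through the truncation operation $\trunc{-}$, so the proof is essentially: apply Lemma~\ref{lem:unary}, then truncate. I would start by invoking Lemma~\ref{lem:unary} to obtain $\EAlam$-terms $f_0, f_1 : \sigma \multimap \sigma$ and $g : (\sigma \multimap \sigma) \multimap \ldots \multimap (\sigma \multimap \sigma) \multimap \oc\Bool$ (with $m$ copies) such that for every $s : \Str$, if $s\,\oc{f_0}\,\oc{f_1} \longrightarrow^* \oc{h}$ then $u\,s$ and $\oc(g\,h\,\ldots\,h)$ have the same normal form. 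Then I would set the new terms to be $\trunc{f_0}$, $\trunc{f_1}$, $\trunc{g}$; by the typing proposition for $\trunc{-}$, these are $\Alamtwo$-terms (since truncation erases all exponentials, a closed truncated term lives in the exponential-free fragment) of the announced types $\tau = \trunc{\sigma \multimap \sigma} = \trunc{\sigma} \multimap \trunc{\sigma}$ and $\tau \multimap \ldots \multimap \tau \multimap \trunc{\oc\Bool}$; note $\trunc{\oc\Bool} = 1$ is not quite $\oc\Bool$, so a small point to address is that we actually want the output still wrapped as $\oc(\ldots)$ — I expect the cleanest phrasing is to keep $g$ itself truncated but re-add one exponential by hand, or to observe that $\trunc{\Bool} = \Bool$ and work with $\oc(\trunc{g}\,h\,\ldots\,h)$ where the outer $\oc$ is put back explicitly.

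The real content is checking that the conclusion transfers. Specializing $s$ to the Church encoding $\overline{w}$ for $w \in \{0,1\}^*$, suppose $\overline{w}\,\oc{f_0}\,\oc{f_1} \longrightarrow^* \oc{h}$. I would apply the proposition that reduction commutes with truncation ($t \longrightarrow t'$ implies $\trunc{t} \longrightarrow^* \trunc{t'}$) together with the fact that $\trunc{\oc{t}} = \lambda x.\,x$ is too lossy at the outermost level — so instead the right move is to note that $h$ is a closed term at depth $1$ inside $\oc h$, hence truncation applied one level down is what matters. Concretely, I expect to argue: from $u\,\overline{w}$ and $\oc(g\,h\,\ldots\,h)$ having the same normal form (Lemma~\ref{lem:unary}), and from $\trunc{-}$ sending reductions to reductions, $\trunc{u\,\overline{w}}$ and $\trunc{\oc(g\,h\,\ldots\,h)}$ reduce to a common term; but this collapses too much at depth $0$. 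The correct route is therefore to track the depth-$1$ subterms directly: the normal form of $u\,\overline{w}$ is $\oc{b}$ with $b \in \{\mathtt{true},\mathtt{false}\}$ by the reading property (Proposition~\ref{prop:reading}), and likewise $g\,h\,\ldots\,h$ normalizes to that same $b$; now observe that $\trunc{g}\,\trunc{h}\,\ldots\,\trunc{h} \longrightarrow^* \trunc{b} = b$ by commutation of $\longrightarrow$ with $\trunc{-}$ (applied to the reduction sequence $g\,h\,\ldots\,h \longrightarrow^* b$, all at depth $0$ so no exponential erasure happens to $g$, $h$, $b$ themselves — here one uses that $\tau = \sigma\multimap\sigma$ is linear, so $h$ has no exponentials at its own top level that get destroyed). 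Finally $\trunc{h}$ is exactly the $h$ obtained from $\overline{w}\,\oc{\trunc{f_0}}\,\oc{\trunc{f_1}}$: indeed $\overline{w}\,\oc{f_0}\,\oc{f_1} \longrightarrow^* \oc h$ gives, after truncating one level inside, $\overline{w}\,\oc{\trunc{f_0}}\,\oc{\trunc{f_1}} \longrightarrow^* \oc{\trunc h}$, using that $\trunc{-}$ commutes with reduction and that the Church encoding $\overline w$ in $\EAlam$ truncates to (a term reducing to) itself.

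The main obstacle — and the point to be careful about in writing — is precisely this bookkeeping: the truncation $\trunc{\oc t} = \lambda x.\,x$ is deliberately destructive, so one cannot naively truncate the statements containing $\oc h$ and $\oc\Bool$; one must truncate \emph{underneath} the outermost exponential, i.e.\ work with $\trunc{h}$ and with the boolean $b$ itself (on which $\trunc{-}$ is the identity, since $\trunc{\Bool} = \Bool$ and $\trunc{\mathtt{true}} = \mathtt{true}$). Once it is observed that all the relevant reduction sequences coming out of Lemma~\ref{lem:unary} — namely $\overline w\,\oc{f_0}\,\oc{f_1} \to^* \oc h$ and $g\,h\,\ldots\,h \to^* b$ — take place with the critical data ($h$, $g$, $b$) sitting at a fixed depth where truncation is injective on them, the commutation proposition does all the work, and one reads off $\Alamtwo$-terms $\trunc{f_0},\trunc{f_1},\trunc g$ with the desired property (with $m$ unchanged).
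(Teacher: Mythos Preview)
Your approach is the paper's: apply Lemma~\ref{lem:unary} to get $f'_0,f'_1,g'$, set $f_i = \trunc{f'_i}$ and $g = \trunc{g'}$, then check the conclusion transfers. (Your worry about $\trunc{\oc\Bool} = 1$ stems from a typo in the statement: the proof of Lemma~\ref{lem:unary} actually builds $g'$ with codomain $\Bool$, not $\oc\Bool$, so $\trunc{g'}$ has codomain $\trunc{\Bool} = \Bool$ and the issue evaporates.)

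There is, however, one step where your justification fails. To relate the $h$ of the present lemma to the $h'$ coming from Lemma~\ref{lem:unary}, you propose to derive $\overline{w}\,\oc{f_0}\,\oc{f_1} \longrightarrow^* \oc{\trunc{h'}}$ from $\overline{w}\,\oc{f'_0}\,\oc{f'_1} \longrightarrow^* \oc{h'}$ by ``truncating one level inside'' and using that ``the Church encoding $\overline{w}$ truncates to (a term reducing to) itself''. The latter is simply false: $\trunc{\overline{w}} = \lambda f_0.\,\lambda f_1.\,(\lambda x.\,x)$, since the entire body of $\overline{w}$ lives at depth $1$ and is erased. And ``truncating one level inside'' is not a defined operation; the paper even remarks that a naive depth-$1$ truncation is ill-behaved. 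The paper instead \emph{explicitly unfolds the Church encoding}: one has $\overline{w}\,\oc{f'_0}\,\oc{f'_1} \longrightarrow^* \oc(\lambda x.\, f'_{w_1}(\ldots(f'_{w_n}\,x)\ldots))$, so $\lambda x.\, f'_{w_1}(\ldots(f'_{w_n}\,x)\ldots) \longrightarrow^* h'$ is a depth-$0$ reduction, and the ordinary truncation proposition yields $\lambda x.\, f_{w_1}(\ldots(f_{w_n}\,x)\ldots) \longrightarrow^* \trunc{h'}$. Since also $\overline{w}\,\oc{f_0}\,\oc{f_1} \longrightarrow^* \oc(\lambda x.\, f_{w_1}(\ldots(f_{w_n}\,x)\ldots))$, one needs \emph{confluence} (which you never invoke) to conclude that $h$ and $\trunc{h'}$ share a normal form --- they are not literally equal, contrary to your ``$\trunc{h}$ is exactly the $h$''. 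Once this step is fixed, the remainder of your argument matches the paper's.
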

\begin{proof}
  Thanks to the previous lemma, there exist $f'_0 : \sigma \multimap \sigma$,
  $f'_1 : \sigma \multimap \sigma$ and $g' : (\sigma \multimap \sigma) \multimap
  \ldots \multimap (\sigma \multimap \sigma) \multimap \oc\Bool$ with $m$ times
  $\tau$, for some $m \in \mathbb{N}$, such that the conclusion holds by
  replacing $\tau$ by $\sigma \multimap \sigma$ and $f_0,f_1,g$ by
  $f'_0,f'_1,g'$. The only issue is that $f'_0,f'_1,g'$ might not be in
  $\Alamtwo$. The idea is therefore to take $f_0 = \trunc{f'_0}$, $f_1 =
  \trunc{f'_1}$ and $g = \trunc{g'}$, and to check that this works.

  Let $h' : \sigma \multimap \sigma$ be such that
  $\overline{w}\,\oc{f'_0}\,\oc{f'_1} \longrightarrow^* \oc{h'}$. Since
  $\overline{w} = \lambda\oc a_0.\, \lambda\oc a_1.\, \oc(\lambda x.\,
  a_{w_1}\, (\ldots (a_{w_n}\,x)\ldots))$,
  \[\lambda x.\, f'_{w_1}\, (\ldots (f'_{w_n}\,x)\ldots) \longrightarrow^* h'
    \quad \text{and by truncation} \quad
    \lambda x.\, f_{w_1}\, (\ldots (f_{w_n}\,x)\ldots) \longrightarrow^*
    \trunc{h'} \]

  So if $h$ is such that $\overline{w}\,\oc{f_0}\,\oc{f_1} \longrightarrow^*
  \oc{h}$, then by confluence~\cite[Lemma~8]{Benedetti}, $h$ and $\trunc{h'}$
  have the same normal form. Therefore, $g\,h\,\ldots\,h$ and
  $g\,\trunc{h'}\,\ldots\,\trunc{h'}$ have the same normal form. But the latter
  is none other than $\trunc{g'\,h'\,\ldots\,h'}$.

  To conclude, observe that:
  \begin{itemize}
  \item the normal form of $\oc(g'\,h'\,\ldots\,h')$ is the same as that of
    $u\,\overline{w}$ by the previous lemma;
  \item by Proposition~\ref{prop:reading}, the normal form of
    $g'\,h'\,\ldots\,h'$ is some $b \in \{\mathtt{true},\mathtt{false}\}$;
  \item since $\trunc{b} = b$, $\oc\trunc{g'\,h'\,\ldots\,h'}$ has the same
    normal form as $u\,\overline{w}$.
  \end{itemize}
  By the discussion above, this means that $\oc(g\,h\,\ldots\,h)$ and
  $u\,\overline{w}$ have the same normal form, as desired.
\end{proof}

\subsubsection{Semantic evaluation}

We are now ready to conclude our proof of soundness by adapting Hillebrand and
Kanellakis's argument. Let $\denot{-}$ be any non-trivial \emph{finite}
semantics of $\Alamtwo$ -- the notion of finiteness we need is that $\denot{A}$
has finitely semantic inhabitants for all $\Alamtwo$ types $A$. (Equivalently,
if our semantics is a category with a terminal object $1$, we require
$\mathrm{Hom}(1,\denot{A})$ to be finite for all $A$.) Recall that although such
a semantics is a central ingredient in our proof, we have simply assumed its
existence, which is proved elsewhere (see Lemma~\ref{lem:finsem} and the
subsequent discussion).

\begin{definition}
  Let $A$ be a $\Alamtwo$ type. We define $\bloup{A}{w}(\gamma_0,\gamma_1) =
  \gamma_{w_1} \circ \ldots \circ \gamma_{w_n}$ for $w \in \{0,1\}^*$ and
  $(\gamma_0,\gamma_1) \in \End(\denot{A})^2$. In other words, $\Phi_{A} :
  \{0,1\}^* \to \End(\denot{A})^{\End(\denot{A})^2}$ is the monoid morphism
  sending $c \in \{0,1\}$ to $(\gamma_0,\gamma_1) \mapsto \gamma_c$.
\end{definition}
Here $\End(\denot{A})$ refers to the monoid of endomorphisms of $\denot{A}$ in
the semantics.

\begin{proposition}
  Let $w \in \{0,1\}^*$ and $\bar{w} : \Str$ be its encoding. For any $\Alamtwo$
  type $\sigma$ and $\Alamtwo$-terms $f_0, f_1 : \sigma \multimap \sigma$,
  $\bar{w}\,!f_0\,!f_1$ normalizes into some $\oc{h}$ with $h : \sigma \multimap
  \sigma$, and $\bloup{A}{w}(\denot{f_0},\denot{f_1}) = \denot{g}$.
\end{proposition}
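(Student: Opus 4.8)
The plan is to compute the normal form of $\overline{w}\,\oc{f_0}\,\oc{f_1}$ purely syntactically, and then obtain the required semantic equality from the invariance of $\denot{-}$ under reduction together with compositionality. (I read the subscript in the statement as $\sigma$ and the term $g$ as the $h$ just introduced.)

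First I would take care of the polymorphic instantiation: every $\Alamtwo$ type is linear (the grammar of $\Alamtwo$ has no `$\oc$', so every type is of the class $A$), so the quantifier elimination rule applies and $\overline{w}$ can be used at type $\Str[\sigma] = \oc(\sigma\multimap\sigma)\multimap\oc(\sigma\multimap\sigma)\multimap\oc(\sigma\multimap\sigma)$; since $f_0,f_1:\sigma\multimap\sigma$ are closed, functorial promotion gives $\oc{f_0},\oc{f_1}:\oc(\sigma\multimap\sigma)$ and the application $\overline{w}\,\oc{f_0}\,\oc{f_1}:\oc(\sigma\multimap\sigma)$ is well typed. Unfolding $\overline{w} = \lambda\oc a_0.\,\lambda\oc a_1.\,\oc(\lambda x.\, a_{w_1}\,(\ldots(a_{w_n}\,x)\ldots))$ and performing the two resulting $\longrightarrow_\oc$ steps yields $\overline{w}\,\oc{f_0}\,\oc{f_1} \longrightarrow^* \oc{r}$ with $r = \lambda x.\, f_{w_1}\,(\ldots(f_{w_n}\,x)\ldots)$ (the identity $\lambda x.\, x$ when $w=\varepsilon$). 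The term $r$ is built only from $\Alamtwo$ constructs, hence is itself an $\Alamtwo$-term, and subject reduction gives $r:\sigma\multimap\sigma$. Since $\EAlam$ is normalizing and confluent (the latter by~\cite[Lemma~8]{Benedetti}), $\overline{w}\,\oc{f_0}\,\oc{f_1}$ has a unique normal form, necessarily reached from $\oc{r}$ and therefore of the form $\oc{h}$ with $h$ the normal form of $r$ — still an $\Alamtwo$-term of type $\sigma\multimap\sigma$ by subject reduction. This settles the first half of the statement.

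For the second half, I would invoke the invariance of $\denot{-}$ under $\longrightarrow_\beta$ and $\longrightarrow_\oc$, so that $\denot{h}=\denot{r}$, and then unfold $\denot{r}$ using the standard clauses interpreting $\multimap$-introduction (currying) and application (evaluation): as an endomorphism of $\denot{\sigma}$, $\denot{r}$ equals $\denot{f_{w_1}}\circ\ldots\circ\denot{f_{w_n}}$ (the identity of $\denot{\sigma}$ when $w=\varepsilon$). By the very definition of $\bloup{\sigma}{w}$ this is exactly $\bloup{\sigma}{w}(\denot{f_0},\denot{f_1})$, whence $\denot{h}=\bloup{\sigma}{w}(\denot{f_0},\denot{f_1})$.

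I do not expect a genuine obstacle here; the argument is essentially an unfolding of definitions. The two points that deserve care are that $\sigma$ is a legitimate witness for the $\forall$ in $\Str$ — which is precisely why it matters that $\sigma$ be an $\Alamtwo$ (hence linear) type — and that the computation of $\denot{r}$ must rely only on what is guaranteed by ``$\denot{-}$ is a non-trivial finite semantics of $\Alamtwo$'', namely soundness under reduction and the usual interpretation of abstraction and application, so that the reasoning is independent of which such model (set-theoretic, coherence-space based, etc.) is eventually chosen.
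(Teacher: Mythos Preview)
Your proposal is correct and follows exactly the route the paper intends: the paper's proof is a one-liner (``As in the case of the simply typed $\lambda$-calculus, this is by definition of the Church encoding''), and you have simply spelled out the unfolding of the Church encoding and the compositionality of $\denot{-}$ that this sentence compresses. Your reading of the typos ($A\leadsto\sigma$, $g\leadsto h$) is also right.
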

\begin{proof}
  As in the case of the simply typed $\lambda$-calculus, this is by definition
  of the Church encoding.
\end{proof}

\begin{lemma}
  Let $u : \Str[\sigma_1] \multimap \ldots \multimap \Str[\sigma_n] \multimap
  \oc\Bool$.

  For all $w_1,\ldots,w_n \in \{0,1\}^*$, the normal form of
  $u\,\overline{w_1}\,\ldots\,\overline{w_n}$ is completely determined by the
  functions
  $\bloup{\trunc{\sigma_1}}{w_1},\ldots,\bloup{\trunc{\sigma_n}}{w_n}$. As a
  consequence, the following language is regular:
  \[ \{w \in \{0,1\}^* \mid u\,\overline{w}\,\ldots\,\overline{w}
    \longrightarrow^* \oc\mathtt{true} \} \]
\end{lemma}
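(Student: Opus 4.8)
The plan is to mimic the Hillebrand--Kanellakis argument recalled in the previous subsection, but now working inside the finite semantics $\denot{-}$ of $\Alamtwo$ rather than the full type frame of $\STlam$, and with the syntactic reductions of Lemmas~\ref{lem:shape} and~\ref{lem:unary-trunc} as a preprocessing step. First I would handle the ``completely determined'' claim. Fix $u : \Str[\sigma_1] \multimap \ldots \multimap \Str[\sigma_n] \multimap \oc\Bool$. For each $i$, Lemma~\ref{lem:unary-trunc} (applied after currying, or rather a straightforward $n$-ary generalization of it obtained by the same argument) produces $\Alamtwo$-terms $f_0^{(i)}, f_1^{(i)} : \tau_i := \trunc{\sigma_i \multimap \sigma_i}$ and $g : \tau_1 \multimap \ldots \multimap \tau_1 \multimap \ldots \multimap \tau_n \multimap \ldots \multimap \tau_n \multimap \oc\Bool$ such that, whenever $\overline{w_i}\,\oc{f_0^{(i)}}\,\oc{f_1^{(i)}} \longrightarrow^* \oc{h_i}$, the normal form of $u\,\overline{w_1}\,\ldots\,\overline{w_n}$ equals that of $\oc(g\,h_1\,\ldots\,h_1\,\ldots\,h_n\,\ldots\,h_n)$. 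By the proposition just above, $\denot{h_i} = \bloup{\trunc{\sigma_i}}{w_i}(\denot{f_0^{(i)}},\denot{f_1^{(i)}})$, so $\denot{h_i}$ depends on $w_i$ only through $\bloup{\trunc{\sigma_i}}{w_i}$. By compositionality of $\denot{-}$, $\denot{g\,h_1\,\ldots\,h_n}$ is then a fixed function of the tuple $(\denot{h_1},\ldots,\denot{h_n})$, hence of $(\bloup{\trunc{\sigma_1}}{w_1},\ldots,\bloup{\trunc{\sigma_n}}{w_n})$. Since $g\,h_1\,\ldots\,h_n : \Bool$ normalizes to either $\mathtt{true}$ or $\mathtt{false}$ (Proposition~\ref{prop:reading}, via the $\oc$ wrapper), and $\denot{-}$ is non-trivial, this semantic value pins down which of the two booleans it is, and therefore pins down the normal form of $u\,\overline{w_1}\,\ldots\,\overline{w_n}$. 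That establishes the first sentence.

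For the regularity consequence, specialize to $w_1 = \ldots = w_n = w$ and apply Theorem~\ref{thm:monoid}. Let $M = \prod_{i=1}^n \End(\denot{\trunc{\sigma_i}})^{\End(\denot{\trunc{\sigma_i}})^2}$; this is a finite monoid because the finiteness of $\denot{-}$ makes each $\denot{\trunc{\sigma_i}}$ finite, hence each $\End(\denot{\trunc{\sigma_i}})$ finite, hence the function-space and the product finite. Define $\varphi : \{0,1\}^* \to M$ by $\varphi(w) = (\bloup{\trunc{\sigma_1}}{w},\ldots,\bloup{\trunc{\sigma_n}}{w})$. Each component $\bloup{\trunc{\sigma_i}}{-}$ is a monoid morphism by the Definition preceding the proposition (it sends $c$ to $(\gamma_0,\gamma_1)\mapsto\gamma_c$ and a concatenation to the corresponding composite), so $\varphi$, being a tuple of morphisms into a product, is itself a morphism. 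By the ``completely determined'' part, the set
\[ S = \{\, m \in M \mid u\,\overline{w}\,\ldots\,\overline{w} \longrightarrow^* \oc\mathtt{true} \text{ for some (equivalently any) } w \text{ with } \varphi(w) = m \,\} \]
is well-defined, and the target language is exactly $\varphi^{-1}(S)$. Theorem~\ref{thm:monoid} then gives regularity.

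The main obstacle I anticipate is not the semantic bookkeeping but the $n$-ary generalization of Lemma~\ref{lem:unary-trunc}: the stated lemma is proved only for $n = 1$, and the argument there (normal-form analysis of $v : \oc\Bool$ with one free iterated-string variable, followed by redex-permutation and the linearization Proposition~\ref{prop:linearize-var}) has to be re-run with several independent free variables $x_1,\ldots,x_n$ of types $\Str[\sigma_1],\ldots,\Str[\sigma_n]$. The normal-form shape analysis goes through verbatim since each $x_i$ is still head-applied to two closed banged arguments or buried inside a $\mathtt{let}$, and Proposition~\ref{prop:linearize-var} is already formulated for an arbitrary number of occurrences; the only mild care is that the truncation step must be applied to each $\sigma_i$ separately and confluence invoked once per coordinate. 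A small secondary point is checking that the ``equivalently any'' in the definition of $S$ is legitimate — but that is immediate, being exactly the content of the first sentence of the lemma. I would therefore spend the bulk of the write-up stating the $n$-ary version of Lemma~\ref{lem:unary-trunc} (or remarking that its proof is literally the $n$-fold parallel of the $n=1$ case) and then present the semantic argument above, which is short.
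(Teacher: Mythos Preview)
Your proposal is correct and the regularity portion matches the paper almost verbatim. The difference lies in how you handle the passage from $n=1$ to general $n$ in the ``completely determined'' claim. You propose to reprove Lemma~\ref{lem:unary-trunc} in an $n$-ary form, with $n$ free linear variables $x_1,\ldots,x_n$ and a single global $g$; you correctly identify this as the main cost of your write-up. The paper sidesteps that work entirely: it keeps Lemma~\ref{lem:unary-trunc} as a black box for $n=1$ and argues coordinate-by-coordinate. Concretely, to see that the normal form does not change when each $w_i$ is replaced by some $w'_i$ with $\bloup{\trunc{\sigma_i}}{w_i} = \bloup{\trunc{\sigma_i}}{w'_i}$, one applies the $n=1$ case to the slice $\lambda x_i.\, u\,\overline{w_1}\ldots x_i \ldots \overline{w_n} : \Str[\sigma_i] \multimap \oc\Bool$ and swaps a single coordinate at a time; this is the ``induction on $n$ by repeatedly applying the case $n=1$'' in the paper. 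The paper's route is shorter and avoids reopening the normal-form analysis with several free $\Str[\sigma_i]$ variables; your route yields a more explicit uniform decomposition (one $g$ works for all inputs simultaneously), at the price of redoing the syntactic work of Lemmas~\ref{lem:unary} and~\ref{lem:unary-trunc} in a multivariate setting.
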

\begin{proof}
  We start with the case $n=1$, in which $u : \Str[\sigma] \multimap \oc\Bool$.
  Let $f_0 : \tau$, $f_1 : \tau$ and $g : \tau$ be given by
  Lemma~\ref{lem:unary-trunc}, where $\tau = \trunc{\sigma \multimap \sigma} =
  \trunc{\sigma} \multimap \trunc{\sigma}$. For all $w \in \{0,1\}^*$, if
  $\overline{w}\,f_0\,f_1 \longrightarrow^* \oc{h}$, then $u\,\overline{w}
  \longrightarrow^* b$ for some $b \in \{\mathtt{true},\mathtt{false}\}$ such
  that $g\,h\,\ldots\,h \longrightarrow^* b$. Since $f_0$, $f_1$ and $g$ are in
  $\Alamtwo$, so is $h$ (provided it is normal), and $\denot{g\,h\,\ldots\,h} =
  \denot{g}(\denot{h},\ldots,\denot{h})$ by compositionality. Therefore
  \[ \denot{b} = \denot{g}\left(\bloup{\trunc{\sigma}}{w}(\denot{f_0},\denot{f_1}),
    \ldots, \bloup{\trunc{\sigma}}{w}(\denot{f_0},\denot{f_1})\right) \]
  thanks to the previous proposition. Since our semantics is \emph{non-trivial},
  i.e.\ $\denot{b} = \denot{\mathtt{true}} \iff b = \mathtt{true}$,
  $\bloup{\trunc{\sigma}}{w}$ thus determines the normal form of
  $u\,\overline{w}$.

  The result for arbitrary $n \geq 1$ is obtained by induction on $n$ by
  repeatedly applying the case $n=1$.

  The consequence is that the language $\{w \in \{0,1\}^* \mid
  u\,\overline{w}\,\ldots\,\overline{w} \longrightarrow^* \oc\mathtt{true} \}$
  can be written using only conditions on $\bloup{\trunc{\sigma_i}}{w}$ ($i
  \in \{1,\ldots,n\}$), so it is the preimage of some subset of
  $\prod_{i=1}^n \End(\denot{\sigma_i})^{\End(\denot{\sigma_i})^2}$ by the
  monoid morphism $w \mapsto (\bloup{\trunc{\sigma_1}}{w}, \ldots,
  \bloup{\trunc{\sigma_n}}{w})$.
\end{proof}

This suffices to conclude the soundness proof. Let $t : \oc\Str \multimap
\oc\oc\Bool$. Then, by Lemma~\ref{lem:shape},
\[ \{ w \in \{0,1\}^* \mid t\, \oc\overline{w} \longrightarrow^*
  \oc\oc\mathtt{true} \} = \{w \in \{0,1\}^* \mid
  u\,\overline{w}\,\ldots\,\overline{w} \longrightarrow^* \oc\mathtt{true} \} \]
for some $u : \Str[\sigma_1] \multimap \ldots \multimap \Str[\sigma_n] \multimap
\oc\Bool$. The regularity of this language then follows from the above lemma.

\subsection{Overcoming the expressivity barrier}
\label{sec:overcoming}

Analyzing the our soundness proof for regular languages in $\EAlam$ reveals that
fundamentally, what restricts the computational power is a conjunction of two
facts:
\begin{enumerate}
\item the input $\Str$ is instantiated on some types $\sigma_1, \ldots, \sigma_n$ 
  \emph{known in advance};
\item these $\sigma_i$ are morally finite data types, since they admit finite
  semantics.
\end{enumerate}
This makes it impossible to iterate over, say, the configurations of a Turing
machine, since their size depends on the input and the type $\sigma_i$ cannot
\enquote{grow} to accomodate data of variable size.

If we stay at depth 2 in $\EAlam$, there is no way of avoiding the second fact
(one can always truncate the $\sigma_i$ to exponential-free types), so if we
want to retrieve a larger complexity class than regular languages without
resorting to type fixpoints, we should try to circumvent the first obstacle.
That means that the $\sigma_i$ should vary with the input. Thus, we are led to
consider that \emph{inputs should provide types}:
\begin{itemize}
\item the encoding of an input $x$ would be a term $t_x : \mathtt{Inp}[A_x]$,
  for some type $\mathtt{Inp}$ with one parameter;
\item this $t_x$ would then be given as argument to a program of type $\forall
  \alpha.\, \mathtt{Inp}[\alpha] \multimap \Bool$.
\end{itemize}
In other words, we are considering \emph{existential} input types. Indeed, if we
were to extend $\EAlam$ with existential quantifiers\footnote{The reason this
  extension is not incorporated is that existentials can be encoded:
  $\exists\alpha.\,\tau := \forall\beta.\, (\forall\alpha.\,\tau \multimap
  \beta) \multimap \beta$.}, there would be an isomorphism
$(\exists\alpha.\,\mathtt{Inp}[\alpha]) \multimap \Bool
\cong\forall\alpha.\,(\mathtt{Inp}[\alpha] \multimap \Bool)$.

\begin{remark}
  In fact there is a third fact which plays a role in bridling the complexity:
  the shape of the type $\Str$ which codes sequential iterations (but the same
  could be said of Church encodings of free algebras -- with such inputs one
  characterizes regular tree languages).

  For instance, let us consider as inputs circuits represented by the type
  \[ \forall X. \oc X \multimap \oc(X \multimap X \multimap X) \multimap \oc(X
    \multimap X \otimes X) \multimap \oc X \]
  where $\oc X$ corresponds to $\mathtt{true}$ constants, $\oc(X \multimap X
  \multimap X)$ corresponds to $\mathtt{nand}$ gates, and $\oc(X \multimap X
  \otimes X)$ corresponds to duplication gates used to represent fan-out. Then
  instantiating this with $X = \Bool$ and the obvious evaluation maps gives us
  an encoding of the circuit value problem, which is P-complete.

  Although this input type seems morally less legitimate than Church encodings,
  it is hard to pinpoint precisely why it should be rejected.
\end{remark}

\section{Conclusion}
\label{sec:conclusion}

This paper started with a positive result: there exists a characterization of FP
and $k$-FEXPTIME in $\muEAlam$ whose statement is very simple. However, the
characterization of regular languages in $\EAlam$, which takes up the rest of
the paper, could be seen as a negative result: it demonstrates the lack of
expressivity of $\EAlam$ without type fixpoints. (This is the spirit of
Section~\ref{sec:overcoming}.) Indeed, the small class of regular languages not
quite a well-behaved complexity class, e.g.\ it is not closed under
$\mathrm{AC}^0$ reductions.

That said, one can also read Theorem~\ref{thm:elam-reg} as positive evidence of
a connection between affine typing and automata. This connection clearly depends
on the use of Church encodings -- in other words, on the representation of
strings by their iterators. This opens up two avenues for investigation:
\begin{itemize}
\item One can search for other automata-theoretic classes of interest that can
  be characterized in $\EAlam$.
\item On the other hand, one can hope to obtain a well-behaved sub-polynomial
  complexity class by changing the representation of inputs, following the
  suggestions of Section~\ref{sec:overcoming}.
\end{itemize}

We are currently working on the first research direction, by attempting to
capture classes of \emph{transductions}, i.e.\ of functions computed by automata
with output. As of the time of writing, it seems likely that in $\EAlam$, $\Str
\multimap \Str$ captures the well-known class of \emph{regular functions}
(see~\cite{siglog} for an overview of classical transduction classes, including
regular functions), and that the class defined by $\oc\Str \multimap \oc\Str$
also admits an automata-theoretic characterization.

As for the second one, it is the topic of a sequel\footnote{This sequel has been
  published first, although the results in the present paper were mostly
  obtained before.} paper~\cite{sequel} (joint work with P.\ Pradic) which
studies an input type inspired by finite model theory, following Hillebrand's
thesis~\cite{HillebrandPhD}. We obtain what we believe to be a characterization
of \emph{deterministic logarithmic space} (L), and manage to prove that the
class we capture is between L and NL\footnote{Actually, a more precise upper
  bound is L with an oracle for \emph{unambiguous} non-deterministic logarithmic
  space.}.

\paragraph{The importance of semantics}

A novelty in our approach is that we betray the original spirit of
\enquote{light logics} such as Light Linear Logic and Elementary Linear
Logic~\cite{GirardELL}, which consisted in bounding the complexity of
normalization \enquote{geometrically}, independently of types. Here:
\begin{itemize}
\item geometry still plays an important structuring role, reflected by our use
  of a \enquote{truncation at depth zero} operation, which may be of independent
  interest;
\item but our fine-grained analysis also requires to take into account the
  influence of types through semantics.
\end{itemize}
Though we are not the first to apply semantics to obtain inexpressivity results
in light logics (cf.\ e.g.~\cite{DalLagoBaillot}), our recent discovery of a
finite semantics of linear polymorphism (cf.\ the discussion below the statement
of Lemma~\ref{lem:finsem}) opens up new possibilities. The above-mentioned
sequel on logarithmic space is an illustration of this new way of working in
$\EAlam$ and its variants: the best upper bound that we have is obtained using
the effectiveness of the second-order coherence space model studied
in~\cite{FiniteMALL2}.

\paragraph{Open questions}

Aside from the perspectives already mentioned, there is an obvious question that
remains after Theorem~\ref{thm:elam-reg}: what about $\oc\Str \multimap
\oc^{k+2}\Bool$ (resp.\ $\oc\Str \multimap \oc^{k+1}\Str$) for $k \geq 1$? For
now, what we know about the corresponding complexity class is that:
\begin{itemize}
\item it is contained in $k$-EXPTIME (resp.\ $k$-FEXPTIME), since the soundness
  results for $\muEAlam$ apply \emph{a fortiori} to $\EAlam$;
\item it contains $(k-1)$-EXPTIME (resp.\ $(k-1)$-FEXPTIME), by adapting the
  proofs given in~\cite{baillot}.
\end{itemize}
We must confess that we have no idea about what class $\oc\Str \multimap
\oc^{k+2}\Bool$ corresponds to, let alone about a proof strategy. Our only
guesses is that the first containment is strict, and that semantics can prove
useful for this problem.

\bibliographystyle{eptcs}
\bibliography{bi}

\end{document}